\documentclass[reqno]{amsart}

\usepackage[mathscr]{eucal}
\usepackage{amssymb,verbatim,graphicx,tikz}

\usepackage{latexsym,color}
\usepackage{amsmath}
\usepackage[all]{xy}



\newtheorem{theorem}{Theorem}[section]
\newtheorem{lemma}[theorem]{Lemma}

\newtheorem{rem}[theorem]{Remark}

\newtheorem{question}[theorem]{Question}

\newtheorem{defin}[theorem]{Definition}
\newtheorem{claim}[theorem]{Claim}

\newenvironment{definition}{\begin{defin}\rm}{\end{defin}}




\newcommand{\Wsub}[1]{\mathbf W\kern -2pt_{#1}}
\newcommand{\Wsubit}[1]{W\kern -2pt_{#1}}



\newcommand{\dcomp}{\mathbin{*}}
\newcommand{\D}{\mathsf{D}}
\newcommand{\R}{\mathsf{R}}
\newcommand{\pointsto}[1]{\stackrel{#1}{\longrightarrow}}
\newcommand{\out}{\operatorname{out}}

\newcommand\acomp{\mathbin{;}}
\newcommand\join{\mathbin{+}}
\newcommand\dom{\operatorname{\mathsf{D}}}
\newcommand\ran{\operatorname{\mathsf{R}}}
\newcommand\freevar{\mathcal{F}_{\textit{Var}}}
\newcommand\termvar{T_{\textit{Var}}}
\newcommand\var{\textit{Var}}

\newcommand{\axa}{\mathit{Ax^a}}
\newcommand{\axd}{\mathit{Ax^d}}

\newcommand{\isom}{\mathbb{I}}
\newcommand{\suba}{\mathbb{S}}

\newcommand{\vari}{\mathbb{V}}
\newcommand{\repr}{\mathbb{R}}

\newcommand{\up}[1]{\textup{#1}}

\tikzset{vertex/.style={draw, shape=circle, fill=black, inner sep=0pt, minimum size=3pt},
vertexe/.style={draw, shape=circle, fill=white, inner sep=0pt, minimum size=3pt},}

\begin{document}
\title{Domain and range for angelic and demonic compositions}

\author{Marcel Jackson}\thanks{The first author was supported by ARC Future Fellowship FT120100666 and Discovery Project DP1094578}
\address{Department of Mathematics and Statistics, La Trobe University, Melbourne VIC 3086, Australia}
\email{m.g.jackson@latrobe.edu.au}
\author{Szabolcs Mikul\'as}
\address{Department of Computer Science and Information Systems, Birkbeck College, University of London, London WC1E 7HX, UK}
\email{szabolcs@dcs.bbk.ac.uk}


\begin{abstract}
We give finite axiomatizations for the varieties generated by representable 
domain--range algebras when the semigroup operation is interpreted as
angelic or demonic composition, respectively.
\end{abstract}

\keywords{domain-range algebras, finite axiomatizability, angelic composition, demonic composition}

\maketitle

\section{Introduction}
Any formal approach to modelling programs must encompass both logic and action.  On the one hand, the role of programs is to create and change input: an action on the state space.  On the other hand, the technical action of programs requires conditional tests that are logical in nature.  A common algebraic formalism for this is to model programs as relations on the state space and use restrictions of the identity relation to model logical propositions.  This is elegantly argued in the articles~\cite{DMS} and \cite{MS}, where it is observed that Kozen's axiom system KAT (Kleene algebra with tests) and the program logic PDL (Propositional Dynamic Logic) can be  unified by enriching the language of KAT with the introduction of unary operations modelling the domain and range of relations.   This enables the modal operations of dynamic logic to be precisely captured in a one-sorted algebraic setting.   

These and other articles provide simple axiomatic systems that are sound for the relational program semantics and which are sufficient to reason about many important aspects of programs.  
Completeness of these axioms seems more elusive.   Work involving the present authors showed that no finite system of axioms is sufficient to capture the full first order theory of the algebra of relations under composition with domain and/or range (amongst other operations) \cite{hirmik,jacsto:amon}.  This fact is just one of a swathe of negative results relating to the theory of binary relations.  For signatures involving composition, intersection and either of union or converse, not only is there no finite system of laws, but no complete system of laws can be recursively decided on finite algebras, see \cite[Theorem~2.5]{neu} and \cite[Theorem~8.1]{hirjac}.  Despite these negative results, in many situations it is sufficient to find systems that are complete for equations (rather than the full first order theory).  Kozen's system KA~\cite{koz} is precisely one such system that is complete for the equational theory of relations under composition, union and reflexive transitive closure; others include Andr\'eka \cite{and:91}, and Andr\'eka and Bredkhin \cite{AB-equ-95}.
One of the main results of the present article is to provide a relatively simple (and finite) system of equational axioms that is sound and complete for the equational theory of the operations of domain, range, composition and relational union: Theorem \ref{thm-main} below.

All of the above references consider program composition modelled as conventional composition of binary relations.  The validity of this approach depends on how one is to model nondeterminism: if programs are to be modelled as binary relations on the state space, then how should non-termination of programs be modelled? In particular, what if some computations terminate and some do not?  The most common approach is the so-called angelic model.  In the angelic model of a program $p$ as a relation on the state space, we consider $p$ to relate state $x$ to state $y$ if amongst the possible computations of $p$ when started at $x$ is one that terminates at $y$.  This does not preclude the possibility that some computations of $p$ at $x$ do not terminate.  A stricter model---the demonic model \cite{ngu}---requires in addition (to some computations of $p$ at $x$ leading to $y$), that \emph{all computations of $p$ at $x$ eventually terminate}\footnote{To the best of the authors' knowledge, the earliest reference to demons in this context may be in Broy, Gnatz and Wirsing \cite{BGW}, where Dijkstra \cite{dij} is cited as an instance of demonic modelling of nondeterminancy.}.

When programs are to be modelled angelically, then the relation associated to the composite of the programs $p$ and $q$ is just the usual relational composite of the relations corresponding to $p$ and $q$.  When programs are modelled demonically, then the relation associated to the composite of the programs is the demonic composition of the relations associated to $p$ and $q$.  Demonic composition as a binary operation on binary relations remains associative (see \cite{bacvdW} or \cite[\S5]{bersch} for example), but its general algebraic properties have seen far less algebraic consideration than its angelic counterpart.  A variant of the program logic PDL is developed and shown to be complete and decidable in \cite{demorl}, while in \cite{DJS}  it is observed that the algebra of binary relations under demonic composition and domain is indistinguishable from the algebra of partial maps under conventional (angelic) composition and domain.  This latter system has been well studied and has a well-known complete equational axiomatisation \cite{tro,jacsto:01,man}.   The algebraic properties of demonic relations and partial maps diverge once range information is incorporated.
The second main result of the present article is to find a simple axiomatic system that is sound and complete for the equational theory of demonic composition of relations with domain and range; Theorem~\ref{thm-maind}.

Almost all of the discussion and background to this point concerns formal systems for reasoning about programs and program correctness.  The results of the present article can be seen as part of a continued effort to embed these formal approaches in simple algebraic settings, where basic equational reasoning can be applied. 
Program modelling is just one of several motivations for this work however.  Domain and range are already transparently natural features of both relations and functions, and the modelling of these via unary operations can be traced back at least to the work of Menger \cite{men}, through the work of Schweizer and Sklar \cite{SS1,SS2,SS3,SS4}, Trokhimenko \cite{tro}, Bredikhin \cite{bre} and Schein \cite{sch} and into the work of the authors and their collaborators \cite{HJM,jacsto:01,jacsto:BS,jacsto:modal} as well as in the category-theoretic work of Cockett, Lack, Guo, Hofstra, Manes amongst others \cite{CGH1,CGH2,coclac,cocman}.
Yet another motivation comes from the structural theory of semigroups, where many authors have enriched the usual associative binary multiplication by the addition of unary operations that map onto idempotent elements; see for example Fountain \cite{fou:ade,fou:abu}, the work of Batbedat \cite{bat}, Lawson \cite{law} and many others; a survey on aspects of this theme of research can be found in Hollings \cite{hol}.  The axiom systems we investigate here appear as natural cases in this purely theoretical context. 

\section{Basics}
\begin{definition}\label{def:repdr}
Let $U$ be a set.
We define operations on elements of $\wp(U\times U)$.
\begin{description}
\item[Domain]
$$
\dom(X)=\{(u,u)\mid (u,v)\in X\text{ for some }v\in U\}
$$
\item[Range]
$$
\ran(X)=\{(v,v)\mid (u,v)\in X\text{ for some }u\in U\}
$$
\item[Angelic composition]
$$
X\acomp Y=\{(u,v)\mid (u,w)\in X\text{ and }(w,v)\in Y\text{ for some }w\in U\}
$$
\item[Demonic composition]
\begin{align*}
X\dcomp Y&=\{(u,v)\mid \text{for some }w\in U, (u,w)\in X\text{ and }(w,v)\in Y,\\
&\qquad\text{and for all }w\in U \text{ such that }(u,w)\in X, (w,w)\in\dom(Y)\}
\end{align*}
\end{description}
for every $X,Y\subseteq U\times U$.

The class $\repr(\acomp,\dom,\ran)$ of \emph{angelic domain--range semigroups} is
$$
\isom\suba \{(\wp(U\times U),\acomp,\dom,\ran)\mid U\text{ a set}\}
$$
while the class $\repr(\dcomp,\dom,\ran)$ of \emph{demonic domain--range semigroups} is
$$
\isom\suba \{(\wp(U\times U),\dcomp,\dom,\ran)\mid U\text{ a set}\}
$$
where $\isom$ and $\suba$ denote isomorphic copies and subalgebras, respectively.
\end{definition}

We may call elements of $\repr(\acomp,\dom,\ran)$ and $\repr(\dcomp,\dom,\ran)$
\emph{representable algebras}.
In this paper we give finite equational axiomatizations to the equational theories of
representable domain--range algebras.

A comment on notation: by default we use the letters $a,b,c,d,e,f,g,r,s,t$ for elements of algebras (which are typically to be represented as binary relations) and $p,q,u,v,w,x,y,z$ for abstract variables and for points on which relations act.  Inevitably, there is some entanglement of this convention in cases where algebras act on points formed from themselves and where algebras are formed from abstract variables.

\section{Angelic composition}

In this section we look at angelic composition.
We expand the signature of angelic domain--range semigroups with a join operation $\join$
that is interpreted as union, and define $\repr(\acomp,\dom,\ran,\join)$ as
$$
\isom\suba \{(\wp(U\times U),\acomp,\dom,\ran,\join)\mid U\text{ a set}\}.
$$
Using join $\join$ we can define an ordering $\le$ in the usual way:
$x\le y$ iff $x\join y = y$.

Let $\axa$ be the following set of equations:
\begin{align}
x\acomp(y\acomp z)&=(x\acomp y)\acomp z\label{eq-ass}\\
\dom(x)\acomp x&= x\label{eq-d}\\
x\acomp\ran(x)&=x\label{eq-r}\\
\dom(x)\acomp\dom(x)&=\dom(x)\label{eq-dcd}\\
\ran(x)\acomp\ran(x)&=\ran(x)\label{eq-rcr}\\
\dom(x\acomp y)&=\dom(x\acomp\dom(y))\label{eq-dcdcd}\\
\ran(x\acomp y)&=\ran(\ran(x)\acomp y)\label{eq-rcrrc}\\
\dom(\dom(x)\acomp y)&=\dom(x)\acomp\dom(y)\label{eq-ddcdcd}\\
\ran(x\acomp \ran(y))&=\ran(x)\acomp\ran(y)\label{eq-rcrrcr}\\
\dom(\ran(x))&=\ran(x)\label{eq-dr}\\
\ran(\dom(x))&=\dom(x)\label{eq-rd}\\
\dom(x)\acomp\dom(y)&=\dom(y)\acomp\dom(x)\label{eq-dcom}\\
\ran(x)\acomp\ran(y)&=\ran(y)\acomp\ran(x)\label{eq-rcom}\\
\dom(x)\acomp y&\le y\label{eq-dcord}\\
x\acomp \ran(y)&\le x\label{eq-crord}\\
\intertext{together with the axioms stating that join~$\join$ is a semilattice 
(idempotent, commutative and associative) operation
and that the operations are additive:}
x\acomp (y\join z) &= x\acomp y \join x\acomp z\label{eq-ladd}\\
(x\join y)\acomp z &= x\acomp z \join y\acomp z\\
\dom(x\join y) &= \dom(x)\join \dom(y)\\
\ran(x\join y) &= \ran(x)\join \ran(y)
\end{align}
There was no attempt made to make the above axiom system independent.
Instead we aimed for symmetry and stated both the ``domain'' and ``range'' versions of the axioms.  The laws \eqref{eq-ass}--\eqref{eq-rcom} can very easily be shown equivalent to the \emph{closure semigroups with the left and right congruence conditions} in the sense of the first author and Stokes \cite{jacsto:01}.  The  \emph{adequate semigroups} of Fountain \cite{fou:ade} form a particularly well-studied special case; see Kambites~\cite{kam} for example.

Our main result about angelic composition is the following finite axiomatizability theorem.

\begin{theorem}\label{thm-main}
The variety $\vari(\acomp,\dom,\ran,\join)$ generated by 
the representation class
$\repr(\acomp,\dom,\ran,\join)$
is axiomatized by $\axa$.
\end{theorem}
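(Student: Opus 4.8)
The plan is to prove the two inclusions $\vari(\acomp,\dom,\ran,\join)\subseteq\Mod(\axa)$ and $\Mod(\axa)\subseteq\vari(\acomp,\dom,\ran,\join)$ separately. The first is soundness: one checks directly that each of \eqref{eq-ass}--\eqref{eq-crord} together with the semilattice and additivity laws holds in every algebra $(\wp(U\times U),\acomp,\dom,\ran,\join)$. This is routine; the only laws needing a moment's thought are the order inequalities \eqref{eq-dcord} and \eqref{eq-crord}, which hold because $\dom(X)$ and $\ran(Y)$ are sub-identity relations, so pre- or post-composing with them can only delete pairs. Since $\repr(\acomp,\dom,\ran,\join)\models\axa$ and $\Mod(\axa)$ is a variety, this inclusion follows immediately.

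For the converse inclusion it suffices to prove a representation theorem: every $A\models\axa$ embeds into a member of $\repr(\acomp,\dom,\ran,\join)$. Indeed, since $\repr(\acomp,\dom,\ran,\join)=\isom\suba\{\dots\}$ is already closed under isomorphic copies and subalgebras, an embedding $A\hookrightarrow(\wp(U\times U),\acomp,\dom,\ran,\join)$ places $A$ itself in $\repr(\acomp,\dom,\ran,\join)\subseteq\vari(\acomp,\dom,\ran,\join)$, yielding $\Mod(\axa)\subseteq\vari(\acomp,\dom,\ran,\join)$. So the entire problem reduces to constructing a faithful relational representation of an arbitrary model of $\axa$.

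To build the representation, fix $A\models\axa$ and first analyse the sub-identity elements $E=\{e\in A\mid \dom(e)=e\}$. Using \eqref{eq-dcd}, \eqref{eq-dcom}, \eqref{eq-dr}, \eqref{eq-rd} and the order laws, one shows that $E$ carries a distributive lattice structure whose meet is the restriction of $\acomp$ and whose join is $\join$, and on which $\dom$ and $\ran$ act as the identity. The base set $U$ is taken to be the prime filters of this lattice, which separate the elements of $E$ by the usual Prime Filter Theorem. The candidate map $\theta$ sends each $a\in A$ to the set of pairs $(P,Q)\in U\times U$ for which $a$ \emph{connects} $P$ to $Q$: informally, the domain information of $a$ lies in $P$, the range information lies in $Q$, and the additive action of $a$ carries the propositions of $Q$ back into $P$. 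One then checks that $\theta$ preserves $\dom$, $\ran$ and $\join$; preservation of $\join$ uses additivity together with primeness of the filters, while preservation of $\dom$ and $\ran$ uses \eqref{eq-dcdcd}--\eqref{eq-rcrrcr} to recognise domain and range elements as precisely those that act as sub-identities.

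The main obstacle is preservation of composition, $\theta(a\acomp b)=\theta(a)\acomp\theta(b)$. The inclusion $\supseteq$ is a direct consequence of associativity \eqref{eq-ass} and the congruence laws. The reverse inclusion is the crux: given $(P,Q)\in\theta(a\acomp b)$ one must manufacture an intermediate prime filter $R\in U$ with $(P,R)\in\theta(a)$ and $(R,Q)\in\theta(b)$. This is an interpolation statement, proved by constructing a suitable filter on $E$ ``between'' $\ran(a)$ and $\dom(b)$, using \eqref{eq-dcdcd}, \eqref{eq-rcrrc} and the order laws \eqref{eq-dcord}, \eqref{eq-crord} to force the required incidences, and then extending it to a prime filter by Zorn's lemma while avoiding a prescribed proposition. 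Faithfulness of $\theta$ follows from prime-filter separation: if $a\not\le b$ in $A$, then \eqref{eq-d} and \eqref{eq-r} let one exhibit a pair lying in $\theta(a)\setminus\theta(b)$, so $\theta$ is injective. I expect the intermediate-point construction to be the most delicate step, since it is exactly there that the abstract associative operation is forced to coincide with genuine relational composition.
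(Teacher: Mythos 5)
Your soundness half is fine, but the completeness half rests on a claim that is actually false: you propose to show that \emph{every} model of $\axa$ embeds into some $(\wp(U\times U),\acomp,\dom,\ran,\join)$, i.e.\ that $\Mod(\axa)\subseteq\repr(\acomp,\dom,\ran,\join)$. If that were true, $\axa$ would be a finite (first-order, indeed equational) axiomatization of the representation class itself; but the representation class $\repr(\acomp,\dom,\ran,\join)$ is known to have \emph{no} finite axiomatization (this is the point of the results of Hirsch--Mikul\'as and Jackson--Stokes cited in the paper's conclusion). So there exist non-representable models of $\axa$, and no prime-filter or other construction can produce the embedding you describe for an arbitrary model. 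The theorem only asserts that $\Mod(\axa)$ equals the \emph{variety generated by} $\repr(\acomp,\dom,\ran,\join)$, which is an equality of equational theories, not of representation classes; your reduction conflates ``lies in $\vari(\repr(\acomp,\dom,\ran,\join))$'' with ``lies in $\repr(\acomp,\dom,\ran,\join)$''.

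The correct target, and the one the paper pursues, is much weaker: whenever $\axa\not\vdash s=t$, exhibit \emph{some} representable algebra refuting $s=t$. This is done by working with the free algebra of $\Mod(\axa)$ rather than an arbitrary model, and the argument leans essentially on special properties of the free algebra that fail in general models --- in particular its cycle-freeness (if $\dom(r)\le s\acomp t$ for join-free $r,s,t$ then $s$ and $t$ are already domain elements), established via the Andr\'eka--Bredikhin term-graph machinery, which is also what reduces the problem to join-free terms. The refuting representation is then built by a step-by-step construction of a labelled graph with labels drawn from principal upsets of the free algebra, maintaining coherence and saturation conditions. Your interpolation-of-prime-filters idea is the right \emph{shape} of difficulty for the composition clause, but without restricting to the free algebra (where cycle-freeness holds) the construction you sketch cannot go through, and as stated your overall strategy proves something provably false.
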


\begin{proof}
Our task is to  show that, for all $(\acomp,\dom,\ran,\join)$-terms $s,t$,
$$
\vari(\acomp,\dom,\ran,\join)\models s=t\text{ iff }\axa\vdash s=t
$$
where $\vdash$ denotes derivability in equational logic.

The right-to-left direction follows by the validity of the axioms,
which can be easily checked.
For the other direction we have to show that
$\vari(\acomp,\dom,\ran,\join)\models s=t$ implies $\axa\vdash s=t$.
In fact we will show its contrapositive:
we will assume that $\axa\not\vdash s=t$
and construct a representable algebra $\mathcal{A}\in \repr(\acomp,\dom,\ran,\join)$
such that $\mathcal{A}\not\models s=t$.
The rest of this section is devoted to this task.
\end{proof}

After establishing some elementary consequences of the axioms in Section~\ref{sec-ep},
we introduce term graphs in Section~\ref{sec-tg}.
Using term graphs allows us to focus on join-free terms (Section~\ref{sec-ej}).
We will also use term graphs to show some special properties of the free algebra of $\axa$ 
in Section~\ref{sec-fa}.
In particular, we show that the free algebra is free from cycles:
for join-free terms $r,s,t$,
$\dom(r)\le s\acomp t$ implies $\dom(r)\le s=\dom(s)$ and $\dom(r)\le t=\dom(t)$. 
Observe that this is not true in general, since it is easy to construct (representable) algebras
with cycles:
e.g., let $s=\{(0,1)\}$ and $t=\{(1,0)\}$, whence 
$\dom(s)=\{(0,0)\}\le \{(0,1)\}\acomp\{(1,0)\}= s\acomp t$. 
However, the lack of cycles in the free algebra makes our task of
representing the free algebra easier (Section~\ref{sec-con}).

\subsection{Elementary properties}\label{sec-ep}
We start with the following easy consequences of the axioms.
First, $\le$ is indeed an ordering:
\begin{align}
x&\le x\label{eq-ordr}\\
x\le y \ \&\ y\le x&\Rightarrow x=y\label{eq-ords}\\
x\le y \ \&\ y\le z&\Rightarrow x\le z\label{eq-ordt}
\end{align}
Using the additivity of the operations we get that the operations are monotonic w.r.t.\ $\le$:
\begin{align}
x\le x'&\Rightarrow \dom(x)\le\dom(x')\label{eq-dmon}\\
x\le x'&\Rightarrow \ran(x)\le\ran(x')\label{eq-rmon}\\
x\le x'\ \& \ y\le y'&\Rightarrow x\acomp y\le x'\acomp y'\label{eq-cmon}
\end{align}

Let $\mathcal{A}=(A,\acomp,\dom,\ran,\join)$ be a model of $\axa$.
We extend the operations to sets of elements in the obvious way:
\begin{align*}
\dom(X)&=\{\dom(x)\mid x\in X\}\\
\ran(X)&=\{\ran(x)\mid x\in X\}\\
X\acomp Y&=\{x\acomp y\mid x\in X, y\in Y\}
\end{align*}
for every $X,Y\subseteq A$.
In particular, we define the set $\dom(A)$ of \emph{domain elements} of $\mathcal{A}$ as
\[
\dom(A)=\{\dom(a)\mid a\in A\}=\{a\in A\mid \dom(a)=a\}.
\]
Observe that range elements (defined analogously) coincide with domain elements,
since $\dom(\ran(a))=\ran(a)$ and $\ran(\dom(a))=\dom(a)$.


The following routine facts can likely be extracted from many previous articles (such as \cite{DJS}), but we give proofs as they are short, and because the veracity of our completeness result (Theorem \ref{thm-main}) depends on the facts following from the axioms precisely as presented here.
\begin{claim}\label{claim-sl}
Let $\mathcal{A}$ be a model of $\axa$.
\begin{enumerate}
\item\label{it-one}
The algebra $(\dom(A),\acomp)$ of domain elements is a (lower) semilattice and
the semilattice ordering 
coincides with $\le$.
\item\label{it-two}
For every $a\in A$,
$\dom(a)$ (resp.\ $\ran(a)$) is the minimal element $d$ in $\dom(A)$ such that
$d\acomp a=a$ (resp.\ $a\acomp d=a$).
\item\label{it-three}
For every $a\in A$ and $d,e\in\dom(A)$,
we have $d\acomp a\acomp e\le a$.
\end{enumerate}
\end{claim}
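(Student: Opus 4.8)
The plan is to prove the three parts of Claim~\ref{claim-sl} essentially in order, bootstrapping each part from the equational axioms and the elementary order-theoretic facts \eqref{eq-ordr}--\eqref{eq-cmon} already established. Throughout I will repeatedly use that domain and range elements coincide (the observation immediately preceding the claim), so that every statement about $\dom(A)$ has an automatic dual for range elements.

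For part~\eqref{it-one}, I would first check that $\acomp$ restricted to $\dom(A)$ is idempotent, commutative, and associative: idempotency is \eqref{eq-dcd}, commutativity is \eqref{eq-dcom}, and associativity is inherited from the global associativity \eqref{eq-ass}. Thus $(\dom(A),\acomp)$ is a semilattice. The only real content is that the induced semilattice order coincides with the ambient $\le$. Recall the semilattice order is $d\sqsubseteq e$ iff $d\acomp e=d$. So I must show $d\acomp e=d$ iff $d\join e=e$ for $d,e\in\dom(A)$. For one direction, if $d\join e=e$ then monotonicity \eqref{eq-cmon} and idempotency give $d=d\acomp d\le d\acomp e$, while \eqref{eq-crord} (using $e=\ran(e)$) gives $d\acomp e\le d$; by antisymmetry \eqref{eq-ords}, $d\acomp e=d$. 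Conversely, if $d\acomp e=d$, then I want $d\le e$, i.e.\ $d\join e=e$; here I would use $d=d\acomp e\le e$ via \eqref{eq-dcord} (with $d=\dom(d)$), which yields $d\le e$ directly. So the two orders agree.

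For part~\eqref{it-two}, fix $a$ and set $d=\dom(a)$. That $d\acomp a=a$ is axiom \eqref{eq-d}, so $d$ is \emph{a} solution. For minimality, suppose $e\in\dom(A)$ satisfies $e\acomp a=a$. I must show $d\le e$, i.e.\ $\dom(a)\le e$. Applying $\dom$ to $e\acomp a=a$ and using monotonicity gives $\dom(a)=\dom(e\acomp a)$; then \eqref{eq-ddcdcd} (with $e=\dom(e)$) rewrites $\dom(e\acomp a)=e\acomp\dom(a)=\dom(e)\acomp\dom(a)$, which by part~\eqref{it-one} is the meet $e\sqcap\dom(a)$, and a meet lies below both arguments, so $\dom(a)\le e$. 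The range half is dual via \eqref{eq-r} and \eqref{eq-rcrrcr}.

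For part~\eqref{it-three}, given $a\in A$ and $d,e\in\dom(A)$, I want $d\acomp a\acomp e\le a$. I would split this into two inequalities and compose: by \eqref{eq-dcord} (with $d=\dom(d)$) applied to the element $a\acomp e$, we get $d\acomp(a\acomp e)\le a\acomp e$; and by \eqref{eq-crord} (with $e=\ran(e)$) applied to $a$, we get $a\acomp e\le a$. Chaining these with transitivity \eqref{eq-ordt} and associativity \eqref{eq-ass} gives $d\acomp a\acomp e\le a$. I expect the main obstacle to be purely bookkeeping: keeping careful track of which axiom is the ``domain'' versus ``range'' version and ensuring that each application of \eqref{eq-dcord} or \eqref{eq-crord} is made with a genuine domain/range element in the correct slot, since those two laws are exactly where the one-sided absorption $d\acomp y\le y$ and $x\acomp e\le x$ enter. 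No single step is deep; the care lies in invoking coincidence of domain and range elements to license the dual applications.
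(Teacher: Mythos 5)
Your proposal is correct and follows essentially the same route as the paper's proof: part~\eqref{it-one} rests on the closure of $\dom(A)$ under $\acomp$ together with \eqref{eq-dcord} and \eqref{eq-crord} (plus $e=\ran(e)$) to identify the two orders, part~\eqref{it-two} applies $\dom$ and \eqref{eq-ddcdcd} exactly as in the paper, and part~\eqref{it-three} is the same two-step chain via \eqref{eq-dcord} and \eqref{eq-crord}. The only cosmetic difference is that in part~\eqref{it-one} you argue one direction via monotonicity and antisymmetry where the paper computes the join directly using additivity; both reduce to the same axioms.
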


\begin{proof}
\ref{it-one}:
By~\eqref{eq-ddcdcd} the set of domain elements is closed under~$\acomp$,
which is an associative~\eqref{eq-ass}, idempotent~\eqref{eq-d} and
commutative~\eqref{eq-dcom} operation on domain elements.
The semilattice ordering is defined by
$$
\dom(x)\le '\dom( y)\text{ iff } \dom(x)\acomp \dom(y)=\dom(x)
$$
and we claim that this is equivalent to 
the definition of $\dom(x)\le\dom(y)$ by $\dom(x)\join\dom(y)=\dom(x)$.
Assuming $\dom(x)\acomp \dom(y)=\dom(x)$ we have
$\dom(x)\join\dom(y)=\dom(x)\acomp\dom(y)\join\dom(y)=\dom(y)$,
since $\dom(x)\acomp\dom(y)\le\dom(y)$ (by~\eqref{eq-dcord}).
Assuming $\dom(x)\join\dom(y)=\dom(y)$ we get
\begin{align*}
\dom(x)\acomp\dom(y)&=\dom(x)\acomp(\dom(x)\join\dom(y))&&\\
&=\dom(x)\acomp\dom(x)\join\dom(x)\acomp\dom(y)&&\text{by~\eqref{eq-ladd}}\\
&=\dom(x)\join\dom(x)\acomp\dom(y)&&\text{by~\eqref{eq-dcd}}\\
&=\dom(x)&&
\end{align*}
since $\dom(x)\ge \dom(x)\acomp\ran(\dom(y))=\dom(x)\acomp\dom(y)$
by~\eqref{eq-crord} and~\eqref{eq-rd}.

\ref{it-two}:
Assume that $\dom(d)\acomp a= a$. Then
\begin{align*}
\dom(a)&=\dom(\dom(d)\acomp a)&&\\
&=\dom(d)\acomp \dom(a)&&\text{by~\eqref{eq-ddcdcd}}\\
&\le\dom(d)&&
\end{align*}
as desired.
The proof of the statement about range is analogous.

\ref{it-three}:
Straightforward by~\eqref{eq-dcord} and~\eqref{eq-crord}.
\end{proof}


\subsection{Term graphs}\label{sec-tg}
Let $\termvar^-$ be the set of $(\acomp,\dom,\ran)$-terms generated by the set of variables $\var$.
We will refer to these as join-free terms.
We will adopt the concept of {term graphs} from Andr\'eka and Bredikhin~\cite{AB-equ-95} to~$\termvar^-$.
The operations~$\dom$ and~$\ran$ are not considered explicitly in~\cite{AB-equ-95},
but the concepts and proofs are easily modified to cover these as well.

A \emph{labelled graph} is a structure $G=(V,E)$
where $V$ is a set of vertices and $E\subseteq V\times \var\times V$ is a set of labelled edges.
Given two labelled graphs $G_1=(V_1,E_1)$ and $G_2=(V_2,E_2)$,
a {\em homomorphism} $h\colon G_1\to G_2$ is a map from $V_1$ to $V_2$ that preserves
labelled edges: if $(u,x,v)\in E_1$, then $(h(u), x,h(v))\in E_2$.
Given an equivalence relation $\theta$ on $V$, the
{\em quotient graph} is $G/\theta=(V/\theta,E/\theta)$ where
$V/\theta$ is the set of equivalence classes of $V$ and
$$
E/\theta =\{ (u/\theta, x,v/\theta):(u,x,v)\in E\text{ for some }
u\in u/\theta\text{ and }v\in v/\theta\}.
$$
A \emph{2-pointed graph} is a labelled graph $G=(V,E)$ with two
(not necessarily distinct) distinguished vertices $\iota, o\in V$.
We will call $\iota$ the \emph{input} and $o$ the \emph{output} vertex
of $G$, respectively, and denote 2-pointed graphs as $G=(V,E,\iota, o)$.
In the case of 2-pointed graphs, we require that a homomorphism
preserves input and output vertices as well. 

Let $G_1\oplus G_2$ denote the disjoint union of $G_1$ and $G_2$.
For 2-pointed graphs $G_1=(V_1,E_1,\iota_1,o_1)$ and $G_2=(V_2,E_2,\iota_2,o_2)$,
we define their \emph{composition} as
$$
G_1\acomp G_2=(((V_1,E_1)\oplus(V_2,E_2))/\theta,\iota_1/\theta,o_2/\theta)
$$
where $\theta$ is the smallest equivalence relation on the
disjoint union $V_1$ and $V_2$
that identifies $o_1$ with $\iota_2$.
When no confusion is likely we will identify
an equivalence class $u/\theta$ with $u$, hence
$\iota_i/\theta$ with $\iota_i$ and
$o_i/\theta$ with $o_i$ for $i\in\{1,2\}$.

We define {\em term graphs} as special 2-pointed graphs
by induction on the complexity of terms.
For variable $x$, we choose distinct points $\iota, o$ and let
$$
G_{x}=(\{\iota, o\},\{(\iota,x, o)\},\iota,o).
$$
Let $s$ be a term and assume that $G_s=(V,E,\iota,o)$.
We define
$$
G_{\dom(s)}=(V,E,\iota,\iota)\text{ and }
G_{\ran(s)}=(V,E,o,o).
$$
Finally, for terms $s$ and $t$, we set
$$
G_{s\acomp t}=G_s\acomp G_t.
$$


For any term $s$, we can consider $G_s=(V_s,E_s,\iota_s, o_s)$ as a representable algebra.
To this end let $\sharp\colon \var \to \wp(E_s)$ be a valuation of variables such that 
$$
x^\sharp=\{(u,v)\in V_s\times V_s\mid (u,x,v)\in E_s\}
$$
for every variable $x$ occurring in $s$ (notation $x\in s$).
We define the representable algebra $\mathcal{G}_s$ as
the subalgebra of $(\wp(V_s\times V_s),\acomp,\dom,\ran,\join)$ generated by 
$\{x^\sharp\mid x\in s \}$.
The \emph{universe} $W_s$ of $\mathcal{G}_s$ is
the reflexive--transitive closure of $\bigcup \{x^\sharp\mid x\in s\}$.
Observe that, by the construction of the term graph $G_s$,
$W_s$ is an antisymmetric relation.

We extend $\sharp$ to an interpretation of complex terms in the obvious way:
$$
(\dom(t))^\sharp=\dom(t^\sharp)\quad
(\ran(t))^\sharp=\ran(t^\sharp)\quad
(t_1\acomp t_2)^\sharp=t_1^\sharp\acomp t_2^\sharp
$$
for terms $t,t_1,t_2$.
By an easy induction on the complexity of terms we get the following.
\begin{claim}\label{claim-true}
In $\mathcal{G}_s$, we have $(\iota_s,o_s)\in s^\sharp$.
\end{claim}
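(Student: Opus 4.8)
The plan is to prove Claim~\ref{claim-true} by structural induction on the term $s$, following exactly the inductive definition of the term graph $G_s$ and of the valuation $\sharp$. The goal is to show $(\iota_s, o_s) \in s^\sharp$ in every case.

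For the base case, let $s$ be a variable $x$. By definition $G_x = (\{\iota, o\}, \{(\iota, x, o)\}, \iota, o)$, and the valuation gives $x^\sharp = \{(u,v) \mid (u,x,v) \in E_x\}$. Since $(\iota, x, o)$ is the unique edge, we get $(\iota, o) \in x^\sharp$ immediately, which is the desired conclusion.

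For the inductive step there are three operations to handle. For $s = \dom(t)$: by induction $(\iota_t, o_t) \in t^\sharp$, and since $(\dom(t))^\sharp = \dom(t^\sharp)$, the definition of $\dom$ gives $(\iota_t, \iota_t) \in \dom(t^\sharp)$; as $G_{\dom(t)}$ has both distinguished vertices equal to $\iota_t$, this is precisely $(\iota_s, o_s) \in s^\sharp$. The case $s = \ran(t)$ is symmetric, using the output vertex $o_t$. For $s = t_1 \acomp t_2$: by induction $(\iota_{t_1}, o_{t_1}) \in t_1^\sharp$ and $(\iota_{t_2}, o_{t_2}) \in t_2^\sharp$. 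The composition identifies $o_{t_1}$ with $\iota_{t_2}$ via the quotient $\theta$, so in $G_{t_1 \acomp t_2}$ these become a single vertex $w$; then $(\iota_{t_1}, w) \in t_1^\sharp$ and $(w, o_{t_2}) \in t_2^\sharp$, and the definition of angelic composition yields $(\iota_{t_1}, o_{t_2}) \in t_1^\sharp \acomp t_2^\sharp = s^\sharp$, with $\iota_s = \iota_{t_1}$ and $o_s = o_{t_2}$ as required.

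The one point needing genuine care, rather than the routine unwinding of definitions, is the behaviour of the valuation under the quotient in the composition case: one must check that interpreting subterms in the quotiented vertex set $V_s/\theta$ is consistent with interpreting them in the disjoint union before quotienting, so that the inductively assumed memberships survive the identification of $o_{t_1}$ with $\iota_{t_2}$. Concretely, I would verify that the quotient map $V_{t_1} \oplus V_{t_2} \to V_s$ carries each $t_i^\sharp$ (computed in the respective component) into $t_i^\sharp$ (computed in $\mathcal{G}_s$), so that the composition witness $w$ is well defined. This is straightforward because $\theta$ only merges the single pair $(o_{t_1}, \iota_{t_2})$ and the term graph construction keeps the two components otherwise disjoint, but it is the step where the interplay between the graph-theoretic composition and the relational valuation must be pinned down.
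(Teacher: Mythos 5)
Your proof is correct and is exactly the ``easy induction on the complexity of terms'' that the paper invokes without writing out. The one delicate point, which you correctly isolate, is that the inductively obtained memberships survive the quotient in the composition case; the cleanest justification is not that $\theta$ merges only one pair, but that the natural maps $V_{t_1},V_{t_2}\to V_{t_1\acomp t_2}$ are labelled-graph homomorphisms and the operations $\dom$, $\ran$, $\acomp$ are positively (hence monotonically) defined, so term interpretations are carried forward along any such homomorphism.
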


Next we recall a characterization of validities using
graph homomorphisms from \cite[Theorem~1]{AB-equ-95}.
\begin{theorem}\label{thm-ab}
The inequality $s\le t$ is valid in representable algebras
iff there is a homomorphism from $G_t$ to $G_s$.
\end{theorem}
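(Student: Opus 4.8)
The plan is to reduce the theorem to a single \emph{realization lemma} that translates membership of a pair in an interpreted term into the existence of a pointed graph homomorphism. Concretely, I would prove: for any labelled graph $G=(V,E)$ with induced valuation $x^\sharp=\{(p,q)\mid (p,x,q)\in E\}$ (every valuation into $\wp(V\times V)$ arises this way), any join-free term $t$, and any $a,b\in V$,
$$
(a,b)\in t^\sharp \iff \text{there is a homomorphism }h\colon G_t\to G\text{ with }h(\iota_t)=a\text{ and }h(o_t)=b.
$$
Since a term is evaluated identically in a subalgebra and in the ambient $\wp(V\times V)$, and every representable algebra is of this form, validity of $s\le t$ over $\repr(\acomp,\dom,\ran,\join)$ is, recalling that $\le$ is interpreted as $\subseteq$, exactly the condition $s^\sharp\subseteq t^\sharp$ for all such graph-valuations. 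So the realization lemma is precisely the tool needed for both directions.

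I would establish the realization lemma by induction on the complexity of $t$. The base case ($t$ a variable) is immediate from the definitions of $G_x$ and $x^\sharp$. For composition $t=t_1\acomp t_2$, a witness $c$ for $(a,c)\in t_1^\sharp$ and $(c,b)\in t_2^\sharp$ corresponds, via the inductive hypothesis, to homomorphisms out of $G_{t_1}$ and $G_{t_2}$ that agree at the glued vertex $o_{t_1}=\iota_{t_2}$ of $G_{t_1}\acomp G_{t_2}$; these amalgamate into a single homomorphism out of $G_{t_1\acomp t_2}$, and conversely the image of the glued vertex supplies the witness $c$. For $t=\dom(t_1)$ I use that $G_{\dom(t_1)}$ is $G_{t_1}$ with both distinguished points set to $\iota_{t_1}$: a pointed homomorphism then forces $a=b$ (matching that $\dom$ returns reflexive pairs) and is exactly a homomorphism $G_{t_1}\to G$ sending $\iota_{t_1}\mapsto a$, which by induction witnesses some $(a,c)\in t_1^\sharp$, i.e.\ $(a,a)\in\dom(t_1)^\sharp$. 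The range case is symmetric.

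With the lemma in hand, both directions of Theorem~\ref{thm-ab} are short. For the ``if'' direction, given a homomorphism $h\colon G_t\to G_s$ and any graph-valuation in which $(a,b)\in s^\sharp$, the lemma produces $g\colon G_s\to G$ with $g(\iota_s)=a$ and $g(o_s)=b$; then $g\circ h$ witnesses $(a,b)\in t^\sharp$, so $s^\sharp\subseteq t^\sharp$ and $s\le t$ is valid. For the ``only if'' direction, I instantiate the algebra $\mathcal{G}_s$ built from the term graph $G_s$: by Claim~\ref{claim-true} we have $(\iota_s,o_s)\in s^\sharp$, validity of $s\le t$ gives $(\iota_s,o_s)\in t^\sharp$, and since the graph of the valuation $\sharp$ is precisely $G_s=(V_s,E_s)$, the lemma delivers a homomorphism $G_t\to G_s$ preserving the distinguished vertices.

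The main obstacle is bookkeeping rather than conceptual. I expect the delicate step to be the composition case of the realization lemma, specifically checking that the smallest equivalence $\theta$ identifying $o_{t_1}$ with $\iota_{t_2}$ identifies nothing else, so that homomorphisms on the two summands glue to a \emph{well-defined} homomorphism on $G_{t_1\acomp t_2}$ and can be recovered by restriction. The $\dom$ and $\ran$ cases---the genuinely new ingredient relative to Andr\'eka and Bredikhin~\cite{AB-equ-95}---turn out to be easy precisely because these operations only relocate distinguished vertices without altering the underlying labelled graph, so the reflexivity of their outputs and the coincidence of range with domain elements fall out automatically.
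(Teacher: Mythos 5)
Your proof is correct and follows essentially the route the paper intends: the paper does not prove Theorem~\ref{thm-ab} itself but cites Andr\'eka--Bredikhin and isolates Lemma~\ref{lem-ab3} as the key step, and your realization lemma is exactly that step, stated for an arbitrary target graph so that the soundness direction follows by composing homomorphisms. Your treatment of the $\dom$ and $\ran$ cases is the routine extension the paper asserts, so nothing further is needed.
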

Observe that $s\le t$ implies that all the variables in $t$ must occur in $s$.
The key step in proving Theorem~\ref{thm-ab} is the following lemma,
see \cite[Lemma~3]{AB-equ-95}.

\begin{lemma}\label{lem-ab3}
Let $s,t$ be terms and consider $G_s=(V_s,E_s,\iota_s, o_s)$.
Then $(\iota_s,o_s)\in t^\sharp$ iff
there is a homomorphism from $G_t$ to $G_s$.
\end{lemma}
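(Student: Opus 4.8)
The plan is to prove both directions of the biconditional by induction on the structure of the term~$t$, keeping the graph~$G_s$ fixed throughout. The statement quantifies over an arbitrary fixed~$s$, so the inductive variable is~$t$, and the induction must be strong enough to survive composition. The natural strengthening is to replace the fixed pair $(\iota_s,o_s)$ by an arbitrary pair $(u,v)\in V_s\times V_s$: I would prove that for every pair $(u,v)$, we have $(u,v)\in t^\sharp$ if and only if there is a homomorphism $h\colon G_t\to G_s$ with $h(\iota_t)=u$ and $h(o_t)=v$. The lemma as stated is then the special case $u=\iota_s$, $v=o_s$. This pointwise version is what makes the inductive step for~$\acomp$ go through cleanly.

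The base case is $t=x$ a variable: here $G_x$ has a single edge $(\iota,x,o)$, and a homomorphism sending $\iota\mapsto u$, $o\mapsto v$ exists exactly when $(u,x,v)\in E_s$, which by the definition of the valuation~$\sharp$ is precisely the condition $(u,v)\in x^\sharp$. For the inductive step I would handle the three operations separately. For composition $t=t_1\acomp t_2$, the forward direction uses that $(u,v)\in (t_1\acomp t_2)^\sharp$ means there is a witness~$w$ with $(u,w)\in t_1^\sharp$ and $(w,v)\in t_2^\sharp$; the induction hypothesis supplies homomorphisms $h_1\colon G_{t_1}\to G_s$ and $h_2\colon G_{t_2}\to G_s$ hitting $(u,w)$ and $(w,v)$, and since $G_{t_1\acomp t_2}$ is the quotient that glues $o_1$ to $\iota_2$, the two maps agree on the glued vertex (both send it to~$w$) and so combine into a single homomorphism. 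The converse reverses this: given a homomorphism on the glued graph, read off the image~$w$ of the identified vertex and restrict to each factor. The cases $t=\dom(t_1)$ and $t=\ran(t_1)$ are easy: $G_{\dom(t_1)}$ is just $G_{t_1}$ with both distinguished points moved to~$\iota_{t_1}$, so a homomorphism hitting $(u,u)$ corresponds to a homomorphism of $G_{t_1}$ whose input maps to~$u$ and whose output maps \emph{somewhere}, which matches $(u,u)\in\dom(t_1^\sharp)$ via the existential quantifier in the definition of~$\dom$; the range case is symmetric.

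The step I expect to require the most care is the composition case, specifically checking that the two homomorphisms glue into a well-defined homomorphism on the quotient graph $G_{t_1\acomp t_2}$. One must verify that the combined map respects the equivalence~$\theta$ that identifies $o_1$ with $\iota_2$ (it does, because both are sent to the common witness~$w$) and that no edges are lost or misdirected under the quotient. A minor subtlety is that the disjoint union could in principle create accidental coincidences when both factors map into the same small graph~$G_s$, but since a graph homomorphism need not be injective this causes no difficulty---the definition of homomorphism only demands that edges be preserved, not reflected. Once the pointwise strengthening is in place, the rest is a routine unwinding of the definitions of~$\sharp$, of the term-graph constructors, and of the graph operations, so no genuinely hard obstacle remains beyond the bookkeeping of the gluing.
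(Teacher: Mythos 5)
Your proof is correct, and it is essentially the standard argument: the paper itself does not prove this lemma but cites \cite[Lemma~3]{AB-equ-95}, noting only that the proof there is ``easily modified'' to cover $\dom$ and $\ran$; your structural induction with the pointwise strengthening (tracking where $\iota_t$ and $o_t$ land) is exactly that modification, and your handling of the gluing in the composition case is the right check. Nothing further is needed.
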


\subsection{Eliminating join}\label{sec-ej}
Recall that our task is to show that for any $(\acomp,\dom,\ran,\join)$-terms $s,t$,
$$
\vari(\acomp,\dom,\ran,\join)\models s=t\text{ implies }\axa\vdash s=t.
$$
This is obviously equivalent to the statements
\begin{align*}
\vari(\acomp,\dom,\ran,\join)\models s\le t\text{ implies }\axa\vdash s\le t,\\
\vari(\acomp,\dom,\ran,\join)\models s\ge t\text{ implies }\axa\vdash s\ge t.
\end{align*}
Thus we assume that $\vari(\acomp,\dom,\ran,\join)\models s\le t$ and
want to show $\axa\vdash s\le t$.

Next we show that the above can be reduced to join-free terms.
Since the operations are additive, every term $s$ can be equivalently written 
in the form $s_1\join\ldots \join s_n$ for some join-free terms $s_1,\ldots , s_n$,
whence we have
$$
\vari(\acomp,\dom,\ran,\join)\models s_1\join\ldots\join s_n=s\le t= t_1\join\ldots \join t_m
$$
for some join-free terms $s_1,\ldots ,s_n,t_1,\ldots , t_m$.
Thus we have
$$
\vari(\acomp,\dom,\ran,\join)\models s_i\le t_1\join\ldots \join t_m
$$
for every $1\le i\le n$.
Recall the term graph $G_{s_i}=(V_{s_i},E_{s_i},\iota_{s_i}, o_{s_i})$
and let $\sharp\colon \var \to \wp(E_{s_i})$ be a valuation of the variables occurring in 
$s_i,t_1,\ldots ,t_m$ such that $x^\sharp=\{(u,v)\in V_{s_i}\times V_{s_i}\mid (u,x,v)\in E_{s_i}\}$
for the variables $x$ occurring in $s_i$.
Consider the generated algebra $\mathcal{G}_{s_i}$.
Since $\mathcal{G}_{s_i}$ is representable, we get
$$
\mathcal{G}_{s_i}\models s_i\le t_1\join\ldots \join t_m
$$
whence by Claim~\ref{claim-true}, we get
$$
(\iota_{s_i}, o_{s_i})\in s_i^\sharp
$$
whence
$$
(\iota_{s_i}, o_{s_i})\in t_1^\sharp\join\ldots\join t_m^\sharp
$$
and thus
$$
(\iota_{s_i}, o_{s_i})\in t_j^\sharp
$$
for some $1\le j\le m$.
By Lemma~\ref{lem-ab3}, there is a homomorphism from
$G_{t_j}$ to $G_{s_i}$.
By Theorem~\ref{thm-ab} we get that
$\vari(\acomp,\dom,\ran,\join)\models s_i\le t_j$.
Now if we manage to show that this implies $\axa\vdash s_i\le t_j$,
then we get
$$
\axa\vdash s=s_1\join\ldots\join s_n\le t_1\join\ldots\join t_m=t
$$
as desired.

Thus it suffices to show
$$
\vari(\acomp,\dom,\ran,\join)\models s\le t\text{ implies }\axa\vdash s\le t
$$
for join-free terms $s,t$.

\subsection{The free algebra}\label{sec-fa}

We will need some properties of the free algebra of the variety defined by $\axa$.

Let $\var$ be a countable set of variables and let $\freevar=(F_\var,\acomp,\dom,\ran,\join)$ 
be the free algebra of the variety defined by $\axa$ freely generated by $\var$.
Recall that the elements of $\freevar$ are the equivalence classes of terms,
where two terms $s$ and $t$ are {equivalent} 
if the equation $s=t$ is derivable from $\axa$ using equational logic.
When we want to emphasize the difference between a term $t$ and its equivalence class,
we will write $\overline{t}$ for the latter.
The operations in $\freevar$ are defined in the obvious way:
$$
\overline{t}\acomp\overline{s}=\overline{t\acomp s}\quad
\dom(\overline{t})=\overline{\dom( t)}\quad
\ran(\overline{t})=\overline{\ran( t)}\quad
\overline{t}\join\overline{s}=\overline{t\join s}
$$
and recall that this definition is indeed independent of the choice of terms $s,t$
from the equivalence classes $\overline{s},\overline{t}$.

\begin{claim}\label{claim-domain}
Let $s,t$ be join-free terms such that 
$\freevar\models \dom({s})\le t$.
Then $t$ is a composition of domain and range terms, i.e.,
it has the syntactical form $\dom(t_1)\acomp\ran(t_2)\acomp\ldots\acomp\dom(t_{n-1})\acomp\ran(t_n)$
for some terms $t_1,t_2,\ldots,t_{n-1},t_n$
\up(allowing some, but not all, of the terms being empty\up).
\end{claim}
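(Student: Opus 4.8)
The plan is to convert the hypothesis into a graph-homomorphism statement through Theorem~\ref{thm-ab}, and then to exploit the acyclicity of the term graph $G_s$ to pin down the syntactic shape of $t$. First I would note that, by freeness of $\freevar$, the assumption $\freevar\models\dom(s)\le t$ is just $\axa\vdash\dom(s)\le t$; since every representable algebra satisfies $\axa$, the inequality $\dom(s)\le t$ is valid throughout $\repr(\acomp,\dom,\ran,\join)$. Theorem~\ref{thm-ab} then yields a homomorphism $h\colon G_t\to G_{\dom(s)}$, and because $G_{\dom(s)}=(V_s,E_s,\iota_s,\iota_s)$ has coinciding input and output vertices while homomorphisms of $2$-pointed graphs preserve distinguished vertices, we obtain $h(\iota_t)=\iota_s=h(o_t)$.

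Next I would attach to each join-free term its \emph{main walk} $P(t)$ from $\iota_t$ to $o_t$ in $G_t$, defined by induction: $P(x)$ is the single edge $(\iota_x,x,o_x)$; $P(\dom(t'))$ and $P(\ran(t'))$ are the empty walk (there $\iota=o$); and $P(t_1\acomp t_2)$ is the concatenation of $P(t_1)$ with $P(t_2)$ through the glued vertex $o_{t_1}=\iota_{t_2}$. A short induction shows that the length of $P(t)$ equals the number of variable occurrences in $t$ lying under no $\dom$ or $\ran$, and that this length is $0$ exactly when $t$ is \emph{syntactically} a composition of $\dom$- and $\ran$-terms (after flattening associativity, every top-level factor is then a $\dom$- or $\ran$-term). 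Hence it suffices to prove that $P(t)$ has length $0$.

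To force this, I would push $P(t)$ forward along $h$: since $h$ carries edges to edges and fixes $h(\iota_t)=h(o_t)=\iota_s$, the image is a closed directed walk at $\iota_s$ in $G_s$ of the same length as $P(t)$. Now $G_s$ is acyclic in the relevant sense: it has no loop edges (a trivial induction, since composition only ever identifies the single pair $o_1=\iota_2$ drawn from different components, so no edge is ever turned into a loop), and its universe $W_s$ is antisymmetric, as already recorded. If the image walk had positive length, with vertices $\iota_s=w_0,\dots,w_k=\iota_s$, then each $w_i$ would both be reached from $\iota_s$ and reach $\iota_s$ along the walk, so $(\iota_s,w_i),(w_i,\iota_s)\in W_s$; antisymmetry gives $w_i=\iota_s$ for all $i$, making every edge of the walk a loop at $\iota_s$ and contradicting the absence of loops. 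Thus the image, and with it $P(t)$, has length $0$, and $t$ is a composition of $\dom$- and $\ran$-terms.

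The main obstacle is precisely the gap that the homomorphism gives only $h(\iota_t)=h(o_t)$ rather than $\iota_t=o_t$: a priori $h$ could collapse a nontrivial path, so a purely homomorphic argument is not enough. Closing this gap is exactly the role of the acyclicity of $G_s$, which forbids any nontrivial closed walk in the image and thereby kills the length of $P(t)$. The only remaining bookkeeping is to match ``composition of $\dom$- and $\ran$-terms'' to the strictly alternating display $\dom(t_1)\acomp\ran(t_2)\acomp\cdots\acomp\ran(t_n)$ of the statement; consecutive factors of equal type are separated by empty placeholder terms behaving as identities, which is what the parenthetical ``allowing some, but not all, of the terms being empty'' sanctions.
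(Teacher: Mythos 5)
Your proof is correct and follows essentially the same route as the paper's: both reduce to the representable term-graph algebra $\mathcal{G}_{\dom(s)}$ and derive a contradiction from the antisymmetry of $W_{\dom(s)}$ together with the absence of variable-labelled loop edges, which together rule out any top-level variable occurrence in $t$. The only cosmetic difference is that you phrase the contradiction via the homomorphism $G_t\to G_{\dom(s)}$ supplied by Theorem~\ref{thm-ab} and an explicit ``main walk,'' whereas the paper argues directly that $(\iota_{\dom(s)},\iota_{\dom(s)})\notin t^\sharp$; by Lemma~\ref{lem-ab3} these are the same statement.
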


\begin{proof}
Let $s,t$ be as in the claim.
Since the axioms are valid in representable algebras and the term graphs are representable, 
we get $\mathcal{G}_{\dom(s)}\models \dom(s)\le t$.
Towards a contradiction, assume that $t$ does not have the form  
$\dom(t_1)\acomp\ran(t_2)\acomp\ldots\acomp\dom(t_{n-1})\acomp\ran(t_n)$.
Then $t$ can be written in the form $r_1\acomp r_2\acomp\ldots\acomp r_m$
such that at least one $r_i$ is a variable, say, $x$.
Since the universe $W_{\dom(s)}$ of $\mathcal{G}_{\dom(s)}$ is an antisymmetric relation 
and there are no loops labelled by variables
(edges of the form $(u,x,u)$) in $G_{\dom(s)}$,
it follows that $(\iota_{\dom(s)},\iota_{\dom(s)})\notin t^\sharp$.
On the other hand,
by Claim~\ref{claim-true} we have 
$(\iota_{\dom(s)},\iota_{\dom(s)})\in (\dom(s))^\sharp$.
Thus $\mathcal{G}_{\dom(s)}\not\models \dom(s)\le t$. 
\end{proof}

\begin{claim}\label{claim-refl}
Let $r,s,t$ be join-free terms such that $\freevar\models \dom(r)\le s\acomp t$.
Then $\freevar\models \dom(r)\le s=\dom(s)$ and $\freevar\models \dom(r)\le t=\dom(t)$. 
\end{claim}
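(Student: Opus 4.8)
The plan is to reduce the statement to the structural Claim~\ref{claim-domain} together with the semilattice facts of Claim~\ref{claim-sl}. Since $s\acomp t$ is again a join-free term, I would first apply Claim~\ref{claim-domain} to the hypothesis $\freevar\models\dom(r)\le s\acomp t$, playing the role of ``$\dom(s)\le t$'' there with $r$ in place of $s$ and $s\acomp t$ in place of $t$. The claim then tells us that $s\acomp t$ is syntactically a composition of domain and range terms, that is, none of its atomic factors under $\acomp$ is a variable.

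Next I would transfer this conclusion to $s$ and $t$ individually. Writing each of $s$ and $t$ as the composition of its atomic factors (each factor a variable, a $\dom(\cdot)$-term, or a $\ran(\cdot)$-term), the atomic factors of $s\acomp t$ are exactly those of $s$ followed by those of $t$, up to associativity~\eqref{eq-ass}. Since Claim~\ref{claim-domain} forbids any variable among these, every atomic factor of $s$ and of $t$ is a domain or range term, and hence a domain element (range elements coincide with domain elements by~\eqref{eq-dr} and~\eqref{eq-rd}). Because the domain elements are closed under $\acomp$ by Claim~\ref{claim-sl}(\ref{it-one}), the products $s$ and $t$ are themselves domain elements, i.e.\ $\freevar\models s=\dom(s)$ and $\freevar\models t=\dom(t)$.

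Finally I would recover the two inequalities. As $s=\dom(s)$, the law~\eqref{eq-dcord} gives $s\acomp t=\dom(s)\acomp t\le t$; and as $t=\dom(t)$ we have $\ran(t)=\ran(\dom(t))=\dom(t)=t$ by~\eqref{eq-rd}, so~\eqref{eq-crord} gives $s\acomp t=s\acomp\ran(t)\le s$. Combining these with the hypothesis $\dom(r)\le s\acomp t$ via transitivity~\eqref{eq-ordt} yields $\dom(r)\le s$ and $\dom(r)\le t$, which together with the equalities $s=\dom(s)$ and $t=\dom(t)$ established above give both halves of the claim.

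The only delicate point is the bookkeeping in the middle step: I must be sure that the atomic-factor decomposition of $s\acomp t$ genuinely splits as the factors of $s$ concatenated with those of $t$, so that the ``no variable factor'' conclusion of Claim~\ref{claim-domain} descends to $s$ and $t$ separately. Once that is verified, the argument is a direct appeal to the already-established closure of domain elements under $\acomp$ and to the order axioms~\eqref{eq-dcord} and~\eqref{eq-crord}.
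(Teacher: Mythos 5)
Your proposal is correct and follows essentially the same route as the paper: apply Claim~\ref{claim-domain} to see that $s\acomp t$, and hence each of $s$ and $t$, is a composition of domain and range terms, conclude via Claim~\ref{claim-sl} that $s$ and $t$ are domain elements, and then extract the two inequalities from $\dom(r)\le\dom(s)\acomp\dom(t)$. The only cosmetic difference is that the paper cites the semilattice ordering on $\dom(F_\var)$ for the last step where you invoke \eqref{eq-dcord} and \eqref{eq-crord} directly; these amount to the same thing.
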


\begin{proof}
By Claim~\ref{claim-domain} we have that $s\acomp t$ is a composition of domain and range terms,
whence so are both of them. 
Since domain elements are closed under the operation~$\acomp$ (Claim~\ref{claim-sl}),
$s$ and $t$ are clearly domain elements when interpreted in the free algebra.
Thus $\freevar\models \dom(r)\le \dom(s)\acomp\dom(t)$.
Since $(\dom(F_\var),\acomp)$ is a semilattice with the ordering~$\le$ (Claim~\ref{claim-sl}),
we get $\freevar\models \dom(r)\le \dom(s)=s$ and $\freevar\models \dom(r)\le \dom(t)=t$. 
\end{proof}

\begin{claim}\label{claim-domain2}
Let $s,t$ be join-free terms such that 
$\freevar\models s\le \dom(t)$.
Then $s$ is a composition of domain and range terms, i.e.,
it has the syntactical form $\dom(s_1)\acomp\ran(s_2)\acomp\ldots\acomp\dom(s_{n-1})\acomp\ran(s_n)$
for some terms $s_1,s_2,\ldots,s_{n-1},s_n$
\up(allowing some, but not all, of the terms being empty\up).
\end{claim}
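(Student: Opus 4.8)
The plan is to mirror the argument of Claim~\ref{claim-domain}, but applied to the term graph $G_s$ of the term $s$ itself rather than to $G_{\dom(s)}$; the difference is that here the crucial diagonal condition emerges as a \emph{conclusion} rather than being built into the construction. First I would pass to the representable algebra $\mathcal{G}_s$. Since every axiom of $\axa$ is valid in representable algebras, the hypothesis $\freevar\models s\le\dom(t)$ yields $\mathcal{G}_s\models s\le\dom(t)$. By Claim~\ref{claim-true} we have $(\iota_s,o_s)\in s^\sharp$, and hence $(\iota_s,o_s)\in(\dom(t))^\sharp=\dom(t^\sharp)$. As the domain of any binary relation is contained in the diagonal, this forces $\iota_s=o_s$.

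The second step is to convert the equality $\iota_s=o_s$ back into syntactic information about $s$, and this is where the antisymmetry of $W_s$ does the work. I would argue by contraposition: suppose $s$ is \emph{not} a composition of domain and range terms, so that $s=r_1\acomp\cdots\acomp r_m$ with at least one factor $r_i$ a variable $x$. Any walk witnessing $(\iota_s,o_s)\in s^\sharp=r_1^\sharp\acomp\cdots\acomp r_m^\sharp$ has the form $\iota_s=u_0,u_1,\dots,u_m=o_s$, in which each domain or range factor contributes a stationary step $u_{j-1}=u_j$ (its interpretation in $\mathcal{G}_s$ being sub-diagonal), while the factor $r_i$ contributes an edge $(u_{i-1},u_i)\in x^\sharp$. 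Since $W_s$ is a reflexive, transitive and antisymmetric relation, hence a partial order, and since $G_s$ contains no loops labelled by variables, that step is strict: $u_{i-1}\neq u_i$, so $u_{i-1}<u_i$ in $W_s$. Chaining the steps gives $\iota_s<o_s$, whence $\iota_s\neq o_s$, contradicting the first step. Hence $s$ has no variable factor, i.e.\ it is a composition of domain and range terms, and the stated alternating form is then obtained exactly as in Claim~\ref{claim-domain} using the commutativity laws \eqref{eq-dcom}, \eqref{eq-rcom} and the conversions \eqref{eq-dr}, \eqref{eq-rd}.

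I expect the only genuinely delicate point to be this second step. One might first hope to derive the claim from Claim~\ref{claim-domain} by the left--right symmetry of $\axa$, but that symmetry sends a hypothesis $s\le\dom(t)$ to one of the form $s^{\partial}\le\ran(t^{\partial})$ rather than to a hypothesis of the form $\dom(\cdot)\le(\cdot)$, so it does not directly reduce one claim to the other. The honest route is therefore to rerun the term-graph computation on $G_s$, and the essential observation is that every walk witnessing $(\iota_s,o_s)\in s^\sharp$ must traverse the edge contributed by the variable factor, so a single top-level variable occurrence in $s$ already separates $\iota_s$ from $o_s$ in the partial order $W_s$.
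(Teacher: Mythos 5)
Your proof is correct and follows essentially the same route as the paper's: pass to the representable term-graph algebra $\mathcal{G}_s$, invoke Claim~\ref{claim-true} to place $(\iota_s,o_s)$ in $s^\sharp$, and use the antisymmetry of $W_s$ together with the absence of variable-labelled loops to show that a top-level variable factor would force $\iota_s\neq o_s$, contradicting $(\iota_s,o_s)\in(\dom(t))^\sharp$. The only difference is cosmetic — you extract $\iota_s=o_s$ from the hypothesis first and then contradict it, whereas the paper derives $\iota_s\neq o_s$ from the negated conclusion and then contradicts the hypothesis.
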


\begin{proof}
Let $s,t$ be as in the claim.
Since the axioms are valid in representable algebras and the term graphs are representable, 
we get $\mathcal{G}_{s}\models s\le \dom(t)$.
Towards a contradiction, assume that $s$ does not have the form  
$\dom(s_1)\acomp\ran(s_2)\acomp\ldots\acomp\dom(s_{n-1})\acomp\ran(s_n)$.
Then $s$ can be written in the form $r_1\acomp r_2\acomp\ldots\acomp r_m$
such that at least one $r_i$ is a variable, say, $x$.
Since the universe $W_{s}$ of $\mathcal{G}_{s}$ is an antisymmetric relation 
and there are no loops labelled by variables
(edges of the form $(u,x,u)$) in $G_{s}=G(r_1)\acomp\ldots\acomp G(r_m)$, 
it follows that $\iota_s\ne o_s$.
By Claim~\ref{claim-true} we have $(\iota_s,o_s)\in s^\sharp$.
On the other hand, since $\mathcal{G}_s$ is representable,
we have $(\iota_{s},o_{s})\notin (\dom(t))^\sharp$.
Thus $\mathcal{G}_{s}\not\models s\le \dom(t)$. 
\end{proof}

\begin{claim}\label{claim-refl2}
Let $s,t$ be join-free terms such that $\freevar\models s\le \dom(t)$.
Then $\freevar\models s=\dom(s)$.
\end{claim}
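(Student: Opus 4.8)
The plan is to deduce the conclusion directly from the structural description of $s$ provided by Claim~\ref{claim-domain2}, together with the semilattice facts about domain elements collected in Claim~\ref{claim-sl}. First I would apply Claim~\ref{claim-domain2} to the hypothesis $\freevar\models s\le\dom(t)$: it tells me that $s$ is a composition of domain and range terms, so that $\freevar$ identifies $s$ with something of the syntactic form $\dom(s_1)\acomp\ran(s_2)\acomp\ldots\acomp\dom(s_{n-1})\acomp\ran(s_n)$ (with some factors possibly absent, but not all).

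Next I would observe that, when interpreted in the free algebra, each factor of this composite is a domain element. The factors of the form $\dom(s_i)$ are domain elements by definition, and the factors of the form $\ran(s_j)$ are domain elements because range elements coincide with domain elements in any model of $\axa$ (this is the remark following the definition of $\dom(A)$, using $\dom(\ran(a))=\ran(a)$, i.e.\ axiom~\eqref{eq-dr}). Hence $\overline{s}$ is a $\acomp$-product of elements of $\dom(F_\var)$.

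I would then finish by invoking Claim~\ref{claim-sl}\ref{it-one}, which asserts that the set of domain elements is closed under $\acomp$. Consequently the whole product, namely $\overline{s}$, is itself a domain element of $\freevar$. By the characterisation $\dom(A)=\{a\in A\mid\dom(a)=a\}$, being a domain element means exactly $\dom(\overline{s})=\overline{s}$, which is the assertion $\freevar\models s=\dom(s)$.

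I do not expect a serious obstacle here, since this claim is the ``one-sided'' analogue of Claim~\ref{claim-refl} and reuses the same two ingredients (the syntactic normal form and closure of domain elements). The only point requiring a little care is the degenerate clause in Claim~\ref{claim-domain2} allowing some factors to be empty: I would note that ``not all'' of the factors are empty, so the product is genuinely nonempty and the above closure argument applies without having to treat an empty composite separately.
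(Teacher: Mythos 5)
Your proof is correct and follows essentially the same route as the paper's: apply Claim~\ref{claim-domain2} to obtain the normal form of $s$ as a composition of domain and range terms, then use the closure of $\dom(F_\var)$ under $\acomp$ from Claim~\ref{claim-sl} (together with the fact that range elements are domain elements) to conclude that $\overline{s}$ is a domain element. The extra care you take with the degenerate clause is a harmless elaboration of the same argument.
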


\begin{proof}
By Claim~\ref{claim-domain2} we have that $s$ is a composition of domain and range terms. 
Since domain elements are closed under the operation~$\acomp$ (Claim~\ref{claim-sl}),
$s$ is clearly a domain element when interpreted in the free algebra.
\end{proof}


\subsection{The construction}\label{sec-con}

Using the results of the previous sections,
we assume that, for some join-free terms $s,t$, we have  
$\vari(\acomp,\dom,\ran,\join)\models s\le t$
and we have to show show that this implies $\axa\vdash s\le t$.
Actually we will show the contrapositive, so we assume that
$\axa\not\vdash s\le t$ and we will construct a representable algebra
$\mathcal{A}\in\repr(\acomp,\dom,\ran,\join)$ witnessing $s\not\le t$:
$\mathcal{A}\not\models s\le t$.

Our assumption that
$\axa\not\vdash s\le t$ is equivalent to
$\freevar\not\models s\le t$.
Instead of $\freevar\models s \le t$ we will sometimes write $\overline{s}\le \overline{t}$.
Let $F^-_\var$ be the set of equivalence classes of join-free terms.

Before embarking on the formal construction of $\mathcal{A}$ we try to give some intuition to the reader.
Let $\mathcal{B}\in\repr(\acomp,\dom,\ran,\join)$, say
$\mathcal{B}\subseteq(\wp(U\times U),\acomp,\dom,\ran,\join)$ for some set $U$.
$\mathcal{B}$ can be viewed as a labelled, directed graph $G=(U,E,\ell)$, where
\begin{itemize}
\item
$U$ is the set of nodes,
\item
$E\subseteq U\times U$ is a set of directed edges,
\item
$\ell\colon U\times U\to\wp(B)$ is a labelling of edges.
\end{itemize}
Indeed, we can define 
$E=\{(u,v)\in U\times U\mid (u,v)\in b\text{ for some }b\in B\}$
and
$\ell(u,v)=\{b\in B\mid (u,v)\in b\}$.
Conversely, given an abstract algebra $\mathcal{B}$ and 
a labelled, directed graph $G=(U,E,\ell)$ satisfying certain conditions (see below),
we can define a representation of $\mathcal{B}$
as a subalgebra of $(\wp(U\times U),\acomp,\dom,\ran,\join)$ by
$$
(u,v)\in b\text{ iff }b\in\ell(u,v)
$$
for every $(u,v)\in U\times U$ and $b\in B$.
The conditions on $G$ must ensure that the operations are correctly represented.
For instance, for angelic composition we need:
\begin{description}
\item[Comp]
$a\acomp b\in\ell(u,v)$ iff there is $w$ such that 
$a\in\ell(u,w)\text{ and }b\in\ell(w,v)$
\end{description} 
for all $a,b\in B$ and $u,v\in U$.
There will be similar conditions for the other operations, see below.
We will use this approach below in constructing the witnessing algebra $\mathcal{A}$
by defining a labelled, directed graph $G_\omega$ where the labels are coming
from the free algebra.

The labelled, directed graph $G_\omega$ will be defined as the union of a chain of labelled, directed graphs 
$G_n= (U_n,E_n,\ell_n)$ for $n\in\omega$, where 
\begin{itemize}
\item
$U_n$ is the set of nodes,
\item
$\ell_n\colon U_n\times U_n\to\wp(F^-_\var)$ is a labelling of edges,
\item
$E_n\subseteq U_n\times U_n$ is a set of edges with non-empty labels.
\end{itemize}
We will make sure that the following \emph{coherence conditions} are maintained during the construction:
\begin{description}
\item[GenC]
$E_n$ is a transitive, reflexive and antisymmetric relation on $U_n$.
\item[PriC]
For every $(u,v)\in E_n$, $\ell_n(u,v)$ is a principal upset: 
$\ell_n(u,v)=a^\uparrow=\{x\in F^-_\var \mid a\le x\}$ 
for some $a\in F^-_\var$.
\item[CompC]
For all $(u,v),(u,w),(w,v)\in U_n\times U_n$ and $a,b\in F^-_\var$, 
if $a\in\ell_n(u,w)$ and $b\in\ell_n(w,v)$, then $a\acomp b\in\ell_n(u,v)$.
\item[DomC]
For all $(u,v)\in U_n\times U_n$ and $a\in F^-_\var$, 
if $\ell_n(u,v)=a^\uparrow$, then $\ell_n(u,u)=\dom(a)^\uparrow$.
\item[RanC]
For all $(u,v)\in U_n\times U_n$ and $a\in F^-_\var$, 
if $\ell_n(u,v)=a^\uparrow$, then $\ell_n(v,v)=\ran(a)^\uparrow$.
\item[IdeC]
For all $(u,v)\in U_n\times U_n$,
$u=v$ iff  $\ell_n(u,v)=\dom(a)^\uparrow$
for some $a\in F^-_\var$.
\end{description}
The construction will terminate in $\omega$ steps, 
yielding $G_\omega=(U_\omega,E_\omega,\ell_\omega)$
where $U_\omega=\bigcup_n U_n$, $\ell_\omega=\bigcup_n \ell_n$,
$E_\omega=\bigcup_n E_n$.

Observe that GenC ensures that $E_\omega$ is antisymmetric,
corresponding to the fact that the free algebra $\freevar$ is free from cycles.
Condition PriC makes sure that the labels are upward closed.
We can view the rest of the coherence conditions as rules
ensuring ``soundness'' of the representation of the operations. 
For instance, CompC tells us that
$a\acomp b\in \ell_\omega(u,v)$ whenever we have a $w$ with
$a\in\ell_\omega(u,w)$ and $b\in\ell_\omega(w,v)$.
But we need the other, ``completeness'', direction as well (see condition Comp above),
that is why we state the saturation conditions below.
By the end of the construction we will achieve the following 
\emph{saturation conditions}:
\begin{description}
\item[CompS]
For all $(u,v)\in U_\omega\times U_\omega$ and $a,b\in F^-_\var$, 
if 
$a\acomp b\in\ell_\omega(u,v)$,
then $a\in\ell_\omega(u,w)$ and $b\in\ell_\omega(w,v)$ for some $w\in U_\omega$.
\item[DomS]
For all $(u,u)\in U_\omega\times U_\omega$ and $a\in F^-_\var$, 
if $\dom(a)\in\ell_\omega(u,u)$, then $a\in\ell_\omega(u,w)$ for some $w\in U_\omega$.
\item[RanS]
For all $(u,u)\in U_\omega\times U_\omega$ and $a\in F^-_\var$, 
if $\ran(a)\in\ell_\omega(u,u)$, then $a\in\ell_\omega(w,u)$ for some $w\in U_\omega$.
\end{description}

Let $\Sigma$ be a fair scheduling function 
$\Sigma\colon\omega\to 3\times\omega\times\omega\times F^-_\var\times F^-_\var$,
i.e., every element of $3\times\omega\times\omega\times F^-_\var\times F^-_\var$
appears infinitely often in the range of $\Sigma$.
The role of $\Sigma$ is to ensure that we deal with every potential ``defect''.

\begin{description}
\item[Initial step]
In the 0th step of the step-by-step construction we define 
$G_0=(U_0,E_0,\ell_0)$
by creating an edge for every element of $F^-_\var$.
We define $U_0$ by choosing  elements $u_a,v_a,\ldots\in\omega$  so that 
$\{u_a,v_a\}\cap\{u_b,v_b\}=\emptyset$  for distinct $a,b$ in $F^-_\var$, and
$u_a=v_a$ iff $\dom(a)=a$ 
(i.e., $a$ is a domain element of $\freevar$).
We can assume that $|\omega\setminus U_0|=\omega$.
We define
\begin{align*}
\ell_0(u_a,v_a)&=a^\uparrow\\
\ell_0(u_a,u_a)&=\dom(a)^\uparrow\\
\ell_0(v_a,v_a)&=\ran(a)^\uparrow
\end{align*} 
and we label all other edges by $\emptyset$.
Thus $E_0=\{(u_a,u_a), (u_a,v_a),(v_a,v_a)\mid a\in F^-_\var\}$.
\end{description}
Observe that the labels are well defined: when $u_a=v_a$ then
$a$ is a domain element, i.e., $\dom(a)=a=\ran(a)$. 

\begin{lemma}\label{lemma-base}
$G_0$ is coherent.
\end{lemma}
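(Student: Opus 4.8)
The plan is to verify the six coherence conditions GenC, PriC, CompC, DomC, RanC and IdeC one by one, directly from the definition of $G_0$. The workhorse throughout is the disjointness requirement $\{u_a,v_a\}\cap\{u_b,v_b\}=\emptyset$ for $a\neq b$: it guarantees that each node of $U_0$ lies in exactly one pair $\{u_c,v_c\}$, so that the only edges incident with that pair are the three edges $(u_c,u_c)$, $(u_c,v_c)$ and $(v_c,v_c)$, all other pairs of nodes carrying the empty label and hence lying outside $E_0$. In particular every condition whose hypothesis demands a non-empty or principal label is triggered only by these three kinds of edge.

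First I would dispatch GenC and PriC. For GenC, reflexivity is immediate since each $u_c$ and each $v_c$ carries its own loop; antisymmetry holds because the sole non-loop edge $(u_c,v_c)$ has no reverse in $E_0$ (disjointness forbids $(v_c,u_c)$ from coinciding with any $(u_d,v_d)$); and transitivity reduces, again by disjointness, to composing edges within a single pair $\{u_c,v_c\}$, where each composite of two of the three edges is once more one of the three. PriC holds essentially by definition, the labels $c^\uparrow$, $\dom(c)^\uparrow$ and $\ran(c)^\uparrow$ being principal upsets; the only thing to check is well-definedness when $u_c=v_c$, which is exactly the case $\dom(c)=c$, whence $\ran(c)=\ran(\dom(c))=\dom(c)=c$ by~\eqref{eq-rd}, so the three prescribed labels agree.

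The computational heart is CompC. Given $a\in\ell_0(u,w)$ and $b\in\ell_0(w,v)$ with both labels non-empty, disjointness forces $u,v,w\in\{u_c,v_c\}$ for a single $c$, leaving only a handful of configurations for the composable pair of edges. In each I would read off the two generators and apply an axiom: for $(u_c,u_c)$ followed by $(u_c,v_c)$ we have $a\ge\dom(c)$ and $b\ge c$, so $a\acomp b\ge\dom(c)\acomp c=c$ by~\eqref{eq-d} and monotonicity~\eqref{eq-cmon}; for $(u_c,v_c)$ followed by $(v_c,v_c)$ we use $c\acomp\ran(c)=c$ from~\eqref{eq-r}; and for the two loop-with-loop cases we use the idempotencies~\eqref{eq-dcd} and~\eqref{eq-rcr}. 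In every case $a\acomp b$ lands in the principal upset labelling $(u,v)$, which is what CompC asserts.

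Finally I would treat DomC, RanC and IdeC. For DomC I would run through the three edge types: an edge $(u_c,v_c)$ with label $c^\uparrow$ has $\ell_0(u_c,u_c)=\dom(c)^\uparrow$ by construction, while the loops labelled $\dom(c)^\uparrow$ and $\ran(c)^\uparrow$ are handled using the derived identities $\dom(\dom(x))=\dom(x)$ (from~\eqref{eq-dr} and~\eqref{eq-rd}) and $\dom(\ran(x))=\ran(x)$ from~\eqref{eq-dr}; RanC is symmetric. The forward direction of IdeC is clear, since every loop carries a domain or range label and range elements are domain elements. The one delicate point, which I expect to be the main obstacle, is the backward direction of IdeC: if $\ell_0(u,v)=\dom(a)^\uparrow$, I must exclude $(u,v)$ being a genuine main edge $(u_c,v_c)$ with $u_c\neq v_c$. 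Here antisymmetry of the free order makes $c^\uparrow=\dom(a)^\uparrow$ force $c=\dom(a)$, so $c$ is a domain element, $\dom(c)=c$, and the defining clause $u_c=v_c\iff\dom(c)=c$ yields $u=v$ after all.
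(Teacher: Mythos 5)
Your proof is correct and follows essentially the same route as the paper's (much terser) verification: the same disjointness observation reduces everything to the three edges within a single pair $\{u_c,v_c\}$, and your CompC cases rest on exactly the identities $\dom(c)\acomp\dom(c)=\dom(c)$, $\dom(c)\acomp c=c=c\acomp\ran(c)$, $\ran(c)\acomp\ran(c)=\ran(c)$ that the paper cites. Your elaboration of the backward direction of IdeC (that $c^\uparrow=\dom(a)^\uparrow$ forces $c$ to be a domain element, hence $u_c=v_c$ by construction) is a correct unpacking of what the paper compresses into ``IdeC follows by the choice of $u_a$ and $v_a$.''
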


\begin{proof}
Conditions GenC and PriC are obvious.
For CompC note that $\dom(a)\acomp\dom(a)=\dom(a)$,
$\dom(a)\acomp a= a = a\acomp\ran(a)$ and
$\ran(a)\acomp\ran(a)=\ran(a)$.
DomC and RanC are straightforward by the definition of the labels.
IdeC follows by the choice of $u_a$ and $v_a$.
\end{proof}

For the successor step $m+1$ we assume inductively that a finite, coherent graph $G_m$ has been constructed.
Let $\Sigma(m+1)=(i,u,v,a,b)$. We will have three types of successor steps depending on the value of $i$.
In each case we will assume that $(u,v)\in E_m$ --- otherwise we define $G_{m+1}=G_m$.
We can assume, by the induction hypothesis PriC, that $\ell_m(u,v)=c^\uparrow$ for some $c\in F^-_\var$.

\begin{description}
\item[Successor step when $\boldsymbol{i=0}$]
Our aim is to extend $G_m$ to create an edge $(u,w)$ witnessing $a$, provided $\dom(a)\in\ell_m(u,v)$.
Thus we assume that $c\le \dom(a)$ --- otherwise we define $G_{m+1}=G_{m}$.
Observe that $c$ must be a domain element (by Claim~\ref{claim-domain2}), i.e., $\dom(c)=c$.
Thus, by IdeC for $G_m$, we have that $u=v$.

We also assume that $\dom(c)\acomp a$ is not a domain element --- 
otherwise we define $G_{m+1}=G_{m}$.
Indeed, if $\dom(c)\acomp a=\dom(\dom(c)\acomp a)$ then using~\eqref{eq-ddcdcd}
we get
\begin{align*}
\dom(c)&=\dom(c)\acomp\dom(a)\\
&=\dom(\dom(c)\acomp a)\\
&=\dom(c)\acomp a\\
&\le a
\end{align*}
whence $c=\dom(c)\le a$, i.e., $a\in\ell_m(u,u)$.

Thus, by the above assumptions,
we have a loop $(u,u)$ labelled by the upset of a domain element $c=\dom(c)\le a$ such that
$\dom(c)\acomp a$ is not a domain element, but we may miss an
edge $(u,w)$ witnessing $a$.

We choose $w\in\omega\setminus U_m$, extend $\ell_m$ by
\begin{align*}
\ell_{m+1}(u,w)&=(\dom(c)\acomp a)^\uparrow\\
\ell_{m+1}(w,w)&=(\ran(\dom(c)\acomp a))^\uparrow\\
\intertext{and for every $(p,u)\in E_m$ with $\ell_m(p,u)=d^\uparrow$ (some $d\in F^-_\var$)}
\ell_{m+1}(p,w)&=(d\acomp a)^\uparrow
\end{align*} 
and all other edges involving the node $w$ have empty labels.
We define $E_{m+1}$ as the set of edges with non-empty labels:
$$
E_{m+1}=\{(u,v)\in U_{m+1}\times U_{m+1}\mid \ell_{m+1}(u,v)\neq\emptyset\}
$$
See Figure~\ref{fig-dom}, where we show the elements whose upsets provide the labels
for the edges.
\end{description}

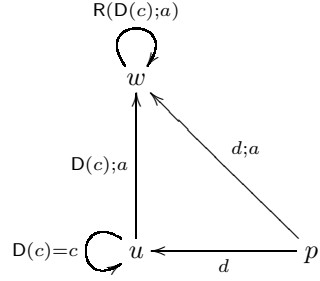
\begin{figure}[h]
\[
\xymatrix{
w\ar@(ul,ur)^{\ran(\dom(c)\acomp a)}
&&\\
&&\\
u\ar@(ul,dl)_{\dom(c)=c}\ar[uu]^{\dom(c)\acomp a}
&&p\ar[ll]^{d}\ar[uull]_{d\acomp a}
}
\]
\caption{Step for domain}\label{fig-dom}
\end{figure}

\begin{description}
\item[Successor step when $\boldsymbol{i=1}$]
This is the mirror image of the previous step for range.
We give an outline, since the reader should not have any difficulty in working out the details.
Our assumption is that
we have a loop $(u,u)$ labelled by the upset of a range element $c=\ran(c)\le a$ such that
$ a\acomp\ran(c)$ is not a range element, but we may miss an
edge $(w,u)$ witnessing $a$.

We choose $w\in\omega\setminus U_m$, extend $\ell_m$ by
\begin{align*}
\ell_{m+1}(w,u)&=(a\acomp\ran(c))^\uparrow\\
\ell_{m+1}(w,w)&=(\dom(a\acomp\ran(c)))^\uparrow\\
\intertext{and for every $(u,p)\in E_m$ with $\ell_m(u,p)=d^\uparrow$ (some $d\in F^-_\var$)}
\ell_{m+1}(w,p)&=(a\acomp d)^\uparrow
\end{align*} 
and all other edges involving the node $w$ have empty labels.
We define $E_{m+1}$ as the set of edges with non-empty labels.
See Figure~\ref{fig-ran}, where we show the elements whose upsets provide the labels
for the edges.
\end{description}

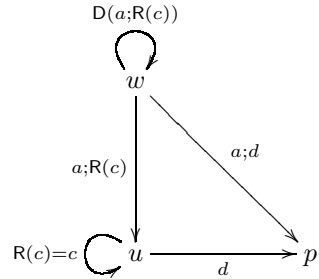
\begin{figure}[h]
\[
\xymatrix{
w\ar@(ul,ur)^{\dom(a\acomp\ran(c))}\ar[dd]_{a\acomp\ran(c)}\ar[ddrr]^{a\acomp d}
&&\\
&&\\
u\ar@(ul,dl)_{\ran(c)=c}\ar[rr]_{d}
&&p
}
\]
\caption{Step for range}\label{fig-ran}
\end{figure}

\begin{description}
\item[Successor step when $\boldsymbol{i=2}$]
In this case our aim is to extend $G_m$ to create edges $(u,w)$ and $(w,v)$ witnessing $a$ and $b$,
provided $a\acomp b\in\ell_m(u,v)$.
Thus we assume that $c\le a\acomp b$ --- otherwise we define $G_{m+1}=G_{m}$.

We can also assume that $u\ne v$ because of the following.
If $u=v$, then, by IdeC for $G_m$, the element $c$ must be a domain element $c=\dom(c)$.
Thus we have $c=\dom(c)\le a\acomp b$.
Using Claim~\ref{claim-refl} we get that both $a$ and $b$ are domain elements
and that $c=\dom(c)\le a$ and $c=\dom(c)\le b$,
whence $a,b\in\ell_m(u,u)$.

Finally, we can assume that neither
$\dom(c)\acomp a\acomp \dom(b\acomp\ran(c))$ nor
$\ran(\dom(c)\acomp a)\acomp b\acomp\ran(c)$ is a domain element because of the following.
Assume that $d=\dom(c)\acomp a\acomp \dom(b\acomp\ran(c))$ is a domain element, $d=\dom(d)$.
Recall that $c\le a\acomp b$, whence
\begin{equation*}
c=\dom(c)\acomp c\acomp\ran(c)\le\dom(c)\acomp a\acomp b\acomp \ran(c)
\end{equation*}
so
\begin{align*}
\dom(c)&\le\dom(\dom(c)\acomp a\acomp b\acomp \ran(c))&&\\
&=\dom(\dom(c)\acomp a\acomp\dom(b\acomp \ran(c)))&&\text{by~\eqref{eq-dcdcd}}\\
&=\dom(c)\acomp a\acomp \dom(b\acomp \ran(c))&&\text{by~$d=\dom(d)$}\\
&\le a
\intertext{i.e., $a\in\ell_m(u,u)$.
Also,}
c&\le\dom(c)\acomp a\acomp b\acomp \ran(c)&&\\
&=\dom(c)\acomp a\acomp \dom(b\acomp \ran(c))\acomp b\acomp \ran(c)&&\text{by~\eqref{eq-d}}\\
&=d\acomp b\acomp \ran(c)&&\\
&\le b&&\text{by~$d=\dom(d)$}
\end{align*}
i.e., $b\in\ell_m(u,v)$.
Thus $a \acomp b\in \ell_m(u,v)$ by CompC.

Showing that the required witness edges already exist in $G_m$ when
$\ran(\dom(c)\acomp a)\acomp b\acomp\ran(c)$ is a domain element
is completely analogous.

Summing up: we assume that
\begin{itemize}
\item[(CC1)]
$c\le a\acomp b$
\item[(CC2)]
$u\ne v$
\item[(CC3)]
$\dom(c)\acomp a\acomp \dom(b\acomp\ran(c))\ne \dom(\dom(c)\acomp a\acomp \dom(b\acomp\ran(c)))$
\item[(CC4)] 
$\ran(\dom(c)\acomp a)\acomp b\acomp\ran(c)\ne \ran(\ran(\dom(c)\acomp a)\acomp b\acomp\ran(c))$
\end{itemize}
otherwise we define $G_{m+1}=G_m$.
If (CC1)--(CC4) hold, then we choose $w\in\omega\setminus U_m$, extend $\ell_m$ by
\begin{align*}
\ell_{m+1}(u,w)&=(\dom(c)\acomp a\acomp \dom(b\acomp\ran(c)))^\uparrow\\
\ell_{m+1}(w,v)&=(\ran(\dom(c)\acomp a)\acomp b\acomp\ran(c))^\uparrow\\
\ell_{m+1}(w,w)&=(\ran(\dom(c)\acomp a)\acomp \dom(b\acomp\ran(c)))^\uparrow\\
\intertext{and for $(p,u),(v,q)\in E_m$ with $\ell_m(p,u)=d^\uparrow$ and
$\ell_m(v,q)=e^\uparrow$ (some $d,e\in F^-_\var$)}
\ell_{m+1}(p,w)&=(d\acomp  a\acomp \dom(b\acomp\ran(c)))^\uparrow\\
\ell_{m+1}(w,q)&=(\ran(\dom(c)\acomp a)\acomp b\acomp e)^\uparrow
\end{align*} 
and all other edges involving $w$ will have empty labels.
We define $E_{m+1}$ as the set of edges with non-empty labels:
See Figure~\ref{fig-comp}, where the label-generating elements are indicated.
\end{description}

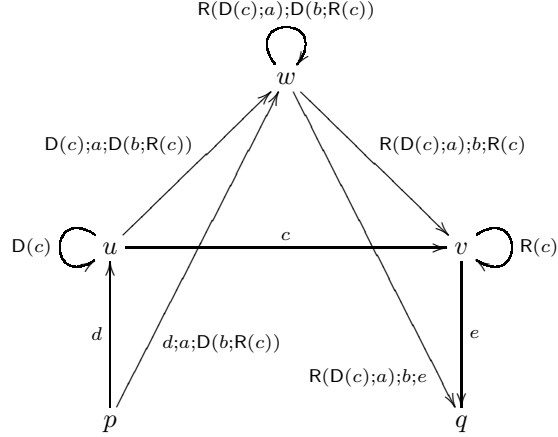
\begin{figure}[h]
\[
\xymatrix{
&&w\ar@(ul,ur)^{\ran(\dom(c)\acomp a)\acomp \dom(b\acomp\ran(c))}\ar[ddrr]^{\ran(\dom(c)\acomp a)\acomp b\acomp\ran(c)}
\ar[ddddrr]_(.85){\ran(\dom(c)\acomp a)\acomp b\acomp e}&&\\
&&&&\\
u\ar@(ul,dl)_{\dom(c)}\ar[uurr]^{\dom(c)\acomp a\acomp \dom(b\acomp\ran(c))}
\ar[rrrr]^{c}&&&&
v\ar@(ur,dr)^{\ran(c)}\ar[dd]^{e}\\
&&&&\\
p\ar[uu]^{d}
\ar[uuuurr]_(.25){d\acomp  a\acomp \dom(b\acomp\ran(c))}&&&&
q
}
\]
\caption{Step for composition}\label{fig-comp}
\end{figure}

\begin{lemma}\label{lem-coh}
Every $G_{m+1}$ is coherent.
\end{lemma}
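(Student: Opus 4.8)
The plan is to prove the lemma by induction on $m$, the base case $G_0$ being exactly Lemma~\ref{lemma-base}. So I assume $G_m$ is coherent and verify the six coherence conditions for $G_{m+1}$, splitting into three cases according to the value of $i$ in $\Sigma(m+1)=(i,u,v,a,b)$. Since the step for $i=1$ is the mirror image of the step for $i=0$ (interchanging $\dom$ and $\ran$ and reversing edges), it suffices to treat $i=0$ and $i=2$ and then invoke symmetry. In every case the node $w$ is fresh and all edges not incident with $w$ are inherited unchanged from $G_m$, so each coherence condition restricted to the old edges holds by the induction hypothesis; the work is entirely in the new edges incident with $w$.

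First I would dispose of the structural conditions. For GenC in the case $i=0$, the only edge leaving $w$ is the loop $(w,w)$, so $w$ is a sink and no new $2$-cycle can arise: antisymmetry is preserved, reflexivity holds by the loop, and transitivity is guaranteed because the construction adds an edge $(p,w)$ for every $(p,u)\in E_m$, using transitivity of $E_m$ to short-circuit any path into $w$. PriC is immediate, as every new label is a principal upset $(\cdot)^\uparrow$ and $w$ carries no prior labels.

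The computational core is CompC, DomC, RanC and IdeC for the edges into $w$. Here I would first extract from the coherence of $G_m$ the two facts that drive everything: RanC applied to $(p,u)$ with $\ell_m(p,u)=d^\uparrow$, together with $\ell_m(u,u)=\dom(c)^\uparrow$, forces $\ran(d)=\dom(c)$, while the guard of the step gives $c=\dom(c)\le\dom(a)$, equivalently $\dom(c)\acomp\dom(a)=\dom(c)$ in the semilattice of domain elements (Claim~\ref{claim-sl}). With these, the identity reconciling RanC across the various edges into $w$ is $\ran(d\acomp a)=\ran(\ran(d)\acomp a)=\ran(\dom(c)\acomp a)$ by \eqref{eq-rcrrc}, which matches the single loop label $\ran(\dom(c)\acomp a)^\uparrow$ at $w$; DomC reduces to $\dom(d\acomp a)=\dom(d)$ and $\dom(\dom(c)\acomp a)=\dom(c)$ via \eqref{eq-dcdcd}, \eqref{eq-ddcdcd} and $\ran(d)=\dom(c)\le\dom(a)$; and CompC along paths into $w$ follows from CompC of $G_m$ and monotonicity, the loop cases collapsing by \eqref{eq-d}, \eqref{eq-r} and \eqref{eq-rcr}. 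For IdeC the only real point is that each non-loop edge into $w$ carries a non-domain generator: for $(u,w)$ this is the explicit assumption that $\dom(c)\acomp a$ is not a domain element, and for $(p,w)$ with $p\ne u$ I would argue by contradiction using that $d$ is not a domain element (IdeC for $G_m$) together with Claim~\ref{claim-refl}, since $d\acomp a$ being a domain element would force $d=\dom(d)$.

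The main obstacle is the composition step $i=2$, where $w$ now sits strictly between $u$ and $v$ and acquires edges on both sides, together with edges $(p,w)$ and $(w,q)$ for all $(p,u),(v,q)\in E_m$. Two things need care. First, GenC: antisymmetry and transitivity must be re-established when $w$ has both predecessors and successors; this uses $u\ne v$ (CC2), antisymmetry of $E_m$, and the fact that any path through $w$, say $p\to w\to q$, is matched by the $E_m$-path $p\to u\to v\to q$. Second, and most delicate, is CompC for the new triangle $u\to w\to v$: I must show that composing the two halves recovers (something above) the old label $c$ of $(u,v)$. Writing $\alpha$ and $\beta$ for the generators of $\ell_{m+1}(u,w)$ and $\ell_{m+1}(w,v)$, the computation $\alpha\acomp\beta=\dom(c)\acomp a\acomp b\acomp\ran(c)\ge c$ succeeds by commuting the domain and range idempotents, collapsing $\dom(c)\acomp a\acomp\ran(\dom(c)\acomp a)=\dom(c)\acomp a$ by \eqref{eq-r} and $\dom(b\acomp\ran(c))\acomp b\acomp\ran(c)=b\acomp\ran(c)$ by \eqref{eq-d}, and finally using $c\le a\acomp b$ (CC1) with \eqref{eq-d} and \eqref{eq-r}. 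The longer paths $p\to w\to v$, $u\to w\to q$ and $p\to w\to q$ then reduce to this triangle computation pre- and post-composed with $d$ and $e$, while DomC and RanC at the new edges follow, as in the $i=0$ case, from \eqref{eq-dcdcd}, \eqref{eq-ddcdcd} and their range duals together with CC1, and IdeC follows from CC3 and CC4.
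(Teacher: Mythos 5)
Your proposal is correct and follows essentially the same route as the paper's proof: the same case split on $i$ (with $i=1$ by symmetry), the same key identities $\ran(d)=\dom(c)$ from RanC of $G_m$ and $\dom(c)\le\dom(a)$ feeding into \eqref{eq-dcdcd}, \eqref{eq-ddcdcd}, \eqref{eq-rcrrc}, the same use of Claim~\ref{claim-refl} for IdeC, and the same central computation $c\le\dom(c)\acomp a\acomp\dom(b\acomp\ran(c))\acomp\ran(\dom(c)\acomp a)\acomp b\acomp\ran(c)$ for CompC on the new triangle in the $i=2$ case. You are slightly more explicit than the paper on GenC (which it dismisses as obvious), but the substance is identical.
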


\begin{proof}
First assume that  $i=0$ and $G_m\ne G_{m+1}$.
Let $p\ne u$ and $d$ be such that $\ell_{m}(p,u)=d^\uparrow$
so that $\ell_{m+1}(u,w)=(d\acomp a)^\uparrow$,
see Figure~\ref{fig-dom}.

Conditions  GenC and PriC are obvious. 
CompC easily follows by the definition of the labels on the new edges,
we just elaborate one case: the triangle consisting of the edges $(p,u)$, $(u,w)$ and $(p,w)$.
By RanC for $G_m$ we have $\dom(c) =\ran( d)$,
thus $(d\acomp\dom(c)\acomp a)^\uparrow = (d\acomp a)^\uparrow$,
as desired.

For DomC we need first
$\dom(c)=\dom(\dom(c))=\dom(\dom(c)\acomp\dom(a))=\dom(\dom(c)\acomp a)$,
by $\dom(c)\le\dom(a)$ and~\eqref{eq-dcdcd}.
Next we show
$\dom(d)=\dom(d\acomp a)$.
Observe that $\dom(c)=\ran(d)$ by RanC for $G_m$, whence 
$\dom(d\acomp a)=\dom(d\acomp\dom(c)\acomp a)= \dom(d\acomp\dom(\dom(c)\acomp a))
=\dom(d\acomp\dom(c))=\dom(d)$ 
by~\eqref{eq-dcdcd}.
A similar argument using~\eqref{eq-rcrrc} shows that $\ran(\dom(c)\acomp a)=\ran(d\acomp a)$, 
establishing RanC.
Finally, for IdeC we need that $\dom(c)\acomp a$ is not a domain element
(which is obvious by our assumption we made during the construction)
and similarly for $d\acomp a$.
By Claim~\ref{claim-refl}, 
$d\acomp a\in\dom(F_\var)$ would imply $d, a\in\dom(F_\var)$. 
But $d\notin\dom(F_\var)$ by IdeC for $G_m$, 
and $a\notin\dom(F_\var)$, since this would imply $\dom(c)\acomp a\in\dom(F_\var)$.

The case for  $i=1$ is completely analogous.

Finally assume that $i=2$ and $G_m\ne G_{m+1}$.
Let $p,q\notin\{u,v\}$ and $d,e$ be such that
$\ell_m(p,u)=d^\uparrow$ and $\ell_m(v,q)=e^\uparrow$,
see Figure~\ref{fig-comp}.

Conditions  GenC and PriC are obvious again.
Next consider DomC and RanC.
First we look at the edge $(u,w)$.
We need
\begin{equation}\label{eq-domc}
\dom(c)=\dom(\dom(c)\acomp a\acomp \dom(b\acomp\ran(c)))
\end{equation}
and
$\ran(\dom(c)\acomp a)\acomp \dom(b\acomp\ran(c))=\ran(\dom(c)\acomp a\acomp \dom(b\acomp\ran(c)))$.  Now
$\dom(c)\ge \dom(\dom(c)\acomp a\acomp \dom(b\acomp\ran(c)))$ is obvious.
For the other direction recall that we assumed that $c\le a\acomp b$,
whence
\begin{align*}
\dom(c)&=\dom(\dom(c)\acomp c\acomp \ran(c))\\
&\le \dom(\dom(c)\acomp a\acomp b\acomp \ran(c))\\
&=\dom(\dom(c)\acomp a\acomp \dom(b\acomp \ran(c)))
\end{align*}
by~\eqref{eq-dcdcd}, and using~\eqref{eq-rd} and \eqref{eq-rcrrcr} we get
\begin{align*}
\ran(\dom(c)\acomp a)\acomp \dom(b\acomp\ran(c))&=
\ran(\dom(c)\acomp a)\acomp \ran(\dom(b\acomp\ran(c)))\\
&=\ran(\dom(c)\acomp a\acomp \ran(\dom(b\acomp\ran(c))))\\
&=\ran(\dom(c)\acomp a\acomp \dom(b\acomp\ran(c)))
\end{align*}
as desired.
Next we consider the edge $(p,w)$.
Recall that $\dom(c)=\ran(d)$ by RanC for $G_m$.
Then we get
\begin{align*}
\dom(d)&=\dom(d\acomp\dom(c))&&\\
&=\dom(d\acomp\dom(\dom(c)\acomp a\acomp \dom(b\acomp\ran(c))))&&\text{by \eqref{eq-domc}}\\
&=\dom(d\acomp\dom(c)\acomp a\acomp \dom(b\acomp\ran(c)))&&\text{by \eqref{eq-dcdcd}}\\
&=\dom(d\acomp a\acomp \dom(b\acomp\ran(c)))&&\\
\end{align*}
and
\begin{align*}
\ran(\dom(c)\acomp a)\acomp\dom(b\acomp\ran(c))&=
\ran(\dom(c)\acomp a)\acomp\ran(\dom(b\acomp\ran(c)))&&\text{by \eqref{eq-rd}}\\
&=\ran(\ran(d)\acomp a)\acomp\ran(\dom(b\acomp\ran(c)))&&\\
&=\ran(d\acomp a)\acomp\ran(\dom(b\acomp\ran(c)))&&\text{by \eqref{eq-rcrrc}}\\
&=\ran(d\acomp a\acomp\ran(\dom(b\acomp\ran(c))))&&\text{by \eqref{eq-rcrrcr}}\\
&=\ran(d\acomp a\acomp\dom(b\acomp\ran(c)))&&\text{by \eqref{eq-rd}}\\
\end{align*}
as desired.
Checking DomC and RanC for other edges is completely analogous.

For CompC the main observation is the following.
By our assumption $c\le a\acomp b$,
whence
\begin{align*}
c&\le\dom(c)\acomp a\acomp b\acomp\ran(c)\\
&=\dom(c)\acomp a\acomp\ran(\dom(c)\acomp a)\acomp \dom(b\acomp\ran(c))\acomp b\acomp\ran(c)\\
&=\dom(c)\acomp a\acomp \dom(b\acomp\ran(c))\acomp\ran(\dom(c)\acomp a)\acomp b\acomp\ran(c)
\end{align*} 
thus CompC holds for the triangle consisting of the edges $(u,w)$, $(w,v)$ and $(u,v)$.
Checking CompC for the other triangles is easy, using the definition of the labels
and the already established DomC and RanC.

Finally, checking IdeC can be done similarly to the case $i=0$,
by using the assumption that $(u,w)$ and $(w,v)$ are labelled by non-domain elements
and that, for every $x,y\in F^-_\var$, we have
$x\acomp y\in \dom(F_\var)$ iff both $x$ and $y$ are in $\dom(F_\var)$.
\end{proof}

\begin{lemma}\label{lem-sat}
$G_\omega$ is coherent and saturated.
\end{lemma}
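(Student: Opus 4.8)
The plan is to establish the two assertions separately: first that $G_\omega$ inherits all the coherence conditions from the finite stages, and then that the fair scheduling forces the saturation conditions. The crucial structural facts driving both halves are that the chain $(G_m)_{m\in\omega}$ is increasing and that a label, once assigned, is never altered — every new label produced in a successor step lies on an edge incident to the freshly chosen node $w\in\omega\setminus U_m$, so $\ell_{m+1}$ genuinely extends $\ell_m$ and $\ell_\omega=\bigcup_m\ell_m$ is well defined on $U_\omega\times U_\omega$.

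For coherence I would argue by \emph{finite support}. Each of GenC, PriC, CompC, DomC, RanC, IdeC refers to only finitely many vertices, edges and labels; given any finite configuration witnessing a hypothesis of one of these conditions in $G_\omega$, all the relevant data already appear in a single $G_m$ (take the maximum of the finitely many indices at which they were introduced), and there the condition holds by Lemma~\ref{lem-coh}, with Lemma~\ref{lemma-base} covering the base stage. Thus transitivity and antisymmetry of $E_\omega$ follow since any two edges $(u,v),(v,w)\in E_\omega$ lie in a common $E_m$; CompC follows because $a\in\ell_\omega(u,w)$ and $b\in\ell_\omega(w,v)$ already hold in some $G_m$, whence $a\acomp b\in\ell_m(u,v)\subseteq\ell_\omega(u,v)$; and PriC, DomC, RanC, IdeC transfer identically. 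Because labels persist, no later step can destroy a condition already secured, so coherence passes to the union.

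For saturation I would combine fairness with persistence. Take CompS and suppose $a\acomp b\in\ell_\omega(u,v)$; this membership, together with the vertices $u,v$, already appears at some stage $m_0$. Since $\Sigma$ is fair, the value $(2,u,v,a,b)$ is scheduled at some step $m+1>m_0$, and by persistence the hypotheses $(u,v)\in E_m$ and $c\le a\acomp b$ (where $\ell_m(u,v)=c^\uparrow$) still hold. If (CC1)--(CC4) hold, the step introduces a node $w$ with $\ell_{m+1}(u,w)=(\dom(c)\acomp a\acomp\dom(b\acomp\ran(c)))^\uparrow$ and $\ell_{m+1}(w,v)=(\ran(\dom(c)\acomp a)\acomp b\acomp\ran(c))^\uparrow$; since the flanking factors are domain/range elements, Claim~\ref{claim-sl} gives that the two generating elements lie below $a$ and below $b$, so $a\in\ell_{m+1}(u,w)$ and $b\in\ell_{m+1}(w,v)$, producing the witness. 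If instead some (CC$k$) fails, the step is declined, but the computations justifying the skip show the witness already exists in $U_m$: if $u=v$ then Claim~\ref{claim-refl} puts $a,b\in\ell_m(u,u)$, and if (CC3) or (CC4) fails the displayed derivations place $a\in\ell_m(u,u)$ and $b\in\ell_m(u,v)$ (or the mirror image), so $w=u$ (resp.\ $w=v$) serves. Either way CompS holds for this instance. The arguments for DomS and RanS are the analogous applications of the $i=0$ and $i=1$ steps: fairness schedules $(0,u,u,a,b)$ (resp.\ $(1,u,u,a,b)$) after $\dom(a)$ (resp.\ $\ran(a)$) enters $\ell_\omega(u,u)$, and either the step creates the witness edge — with $a\in\ell_{m+1}(u,w)$ because $\dom(c)\acomp a\le a$ by \eqref{eq-dcord} — or the skip condition, namely that $\dom(c)\acomp a$ is already a domain element, forces $a\in\ell_m(u,u)$.

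I expect the main obstacle to be organisational rather than mathematical: the two ingredients — locality of the coherence conditions and fairness for the saturation conditions — are standard for step-by-step constructions, and the genuinely algebraic content (that each created label is the upset of an element below $a$ or $b$, and that a declined step leaves the witness already present) is exactly what is verified during the construction and in Lemma~\ref{lem-coh}. The single point demanding care is the interplay of fairness with persistence: one must check that when a defect $(i,u,v,a,b)$ is scheduled, the triggering label has already appeared and has not been \emph{used up}, and this is guaranteed precisely because vertices and labels are only ever added, never deleted or relabelled.
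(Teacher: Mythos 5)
Your proposal is correct and follows essentially the same route as the paper: coherence passes to the union because each condition has finite support and labels persist, and saturation follows from fairness of $\Sigma$ together with the case analysis already carried out in the successor steps (either the scheduled step creates the witness edge, whose label is the upset of an element below $a$ resp.\ $b$ by Claim~\ref{claim-sl} or \eqref{eq-dcord}, or the skip condition shows the witness was already present). The paper's proof is merely a terser version of the same argument.
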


\begin{proof}
Coherence of $G_\omega$ follows from the coherence of all $G_m$.

Let us check DomS. 
Let $(u,u)\in U_\omega\times U_\omega$  be such that $\dom(a)\in\ell_\omega(u,u)$.
By coherence of $G_\omega$ we have that $\ell_\omega(u,u)=\dom(c)^\uparrow$
for some element $c$.
If $\dom(c)\acomp a$ is a domain element, then $a\in \ell_\omega(u,u)$,
as we have seen in case $i=0$ of the successor step of the construction.
If $\dom(c)\acomp a\notin \dom(F_\var)$, consider $m\in\omega$ such that 
$\Sigma(m+1)=(0,u,u,a,a)$ and $\dom(a)\in\ell_m(u,u)$.
Then $G_{m+1}$ contains an edge $(u,w)$ such that $\ell_{m+1}(u,w)=(\dom(c)\acomp a)^\uparrow$.
By \eqref{eq-dcord}
we get $\dom(c)\acomp a\le a$, i.e.,
$a\in\ell_{m+1}(u,w)=\ell_\omega(u,w)$.
Checking RanS is completely analogous.

For CompS assume that $a\acomp b\in\ell_\omega(u,v)=c^\uparrow$, 
and let $m$ be such that $\Sigma(m+1)=(2,u,v,a,b)$ and $a\acomp b\in\ell_m(u,v)$.
Recall that we showed at case $i=2$ of the successor step of the construction
that the edges witnessing $a$ and $b$ already exist in $G_m$ if any of the conditions
(CC2)--(CC3) fails.
If the conditions hold, then $G_{m+1}$ contains $w$ such that 
$(\dom(c)\acomp a\acomp\dom(b\acomp\ran(c)))^\uparrow=\ell_{m+1}(u,w)=\ell_\omega(u,w)$ and 
$(\ran(\dom(c)\acomp a)\acomp b\acomp\ran(c))^\uparrow=\ell_{m+1}(w,v)=\ell_\omega(w,v)$.
Using Claim~\ref{claim-sl} we get $a\in\ell_\omega(u,w)$ and $b\in\ell_\omega(w,v)$
as desired.
\end{proof}

Next we define a valuation $\flat$ of variables.
Recall that for term $r$, its equivalence class in $F_\var$ is denoted by $\overline{r}$.
We let
\begin{equation*}
x^\flat=\{(u,v)\in U_\omega\times U_\omega: \overline{x}\in\ell_\omega(u,v)\}
\end{equation*}
for every variable $x\in \var$.
Let $\mathcal{A}=(A,\acomp,\dom,\ran,\join)$ be the subalgebra
of the full algebra $(\wp(U_\omega\times U_\omega),\acomp,\dom,\ran,\join)$ generated by
$\{x^\flat:x\in \var\}$.
Clearly $\mathcal{A}$ is representable.

\begin{lemma}~\label{lem-true}
For every  join-free term $r$ and $(u,v)\in U_\omega\times U_\omega$,
\begin{equation*}
(u,v)\in r^\flat\mbox{ iff }
\overline{r}\in\ell_\omega(u,v)
\end{equation*}
where $r^\flat$ is the interpretation of
$r$ in $\mathcal{A}$ under the valuation $\flat$.
\end{lemma}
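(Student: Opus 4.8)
The plan is to proceed by induction on the complexity of the join-free term $r$, exploiting coherence of $G_\omega$ for one direction of each equivalence and saturation for the other. The base case $r=x$ a variable is immediate: by the definition of the valuation, $x^\flat=\{(u,v):\overline{x}\in\ell_\omega(u,v)\}$, so $(u,v)\in x^\flat$ iff $\overline{x}\in\ell_\omega(u,v)$. For the inductive step I treat the three operations $\acomp$, $\dom$, $\ran$ separately.

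For composition, suppose $r=s\acomp t$. If $(u,v)\in(s\acomp t)^\flat=s^\flat\acomp t^\flat$, pick $w$ with $(u,w)\in s^\flat$ and $(w,v)\in t^\flat$; the induction hypothesis gives $\overline{s}\in\ell_\omega(u,w)$ and $\overline{t}\in\ell_\omega(w,v)$, and \textbf{CompC} then yields $\overline{s\acomp t}=\overline{s}\acomp\overline{t}\in\ell_\omega(u,v)$. Conversely, if $\overline{s}\acomp\overline{t}\in\ell_\omega(u,v)$, the saturation condition \textbf{CompS} supplies a vertex $w$ with $\overline{s}\in\ell_\omega(u,w)$ and $\overline{t}\in\ell_\omega(w,v)$, and the induction hypothesis returns $(u,w)\in s^\flat$, $(w,v)\in t^\flat$, so $(u,v)\in(s\acomp t)^\flat$.

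For domain, suppose $r=\dom(s)$; here I split on whether $u=v$. Since $\dom$ of a relation produces only diagonal pairs, $\dom(s)^\flat$ contains no pair $(u,v)$ with $u\ne v$, so in that case I must also show $\overline{\dom(s)}\notin\ell_\omega(u,v)$. This is where the structural facts about the free algebra enter: $\overline{\dom(s)}=\dom(\overline{s})$ is a domain element, and if it lay in $\ell_\omega(u,v)$, then the label is nonempty, so by \textbf{PriC} it equals $c^\uparrow$ with $c\le\overline{\dom(s)}$, forcing $c$ itself to be a domain element by Claim~\ref{claim-domain2} and contradicting \textbf{IdeC} for $u\ne v$. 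When $u=v$, \textbf{IdeC} gives $\ell_\omega(u,u)=\dom(c)^\uparrow$ for some $c$. In the forward direction, $(u,u)\in\dom(s)^\flat$ provides $w$ with $(u,w)\in s^\flat$, hence $\overline{s}\in\ell_\omega(u,w)$ by the induction hypothesis; writing $\ell_\omega(u,w)=d^\uparrow$ with $d\le\overline{s}$ and applying \textbf{DomC} gives $\ell_\omega(u,u)=\dom(d)^\uparrow$ with $\dom(d)\le\dom(\overline{s})=\overline{\dom(s)}$ by monotonicity~\eqref{eq-dmon}, so $\overline{\dom(s)}\in\ell_\omega(u,u)$. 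In the backward direction, from $\overline{\dom(s)}\in\ell_\omega(u,u)$ the saturation condition \textbf{DomS} yields $w$ with $\overline{s}\in\ell_\omega(u,w)$, and the induction hypothesis gives $(u,w)\in s^\flat$, whence $(u,u)\in\dom(s^\flat)=\dom(s)^\flat$. The range case $r=\ran(s)$ is entirely symmetric, using \textbf{RanC}, \textbf{RanS}, and the fact that range elements coincide with domain elements (so Claim~\ref{claim-domain2} again settles the off-diagonal case).

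The main obstacle --- or rather the place where all the preparatory work pays off --- is the backward direction of each inductive step, since these are exactly the implications that fail for a merely coherent graph and require the saturation conditions \textbf{CompS}, \textbf{DomS}, \textbf{RanS} engineered into $G_\omega$ by the fair scheduling. A secondary subtlety worth isolating is the off-diagonal case for $\dom$ and $\ran$: ruling out $\overline{\dom(s)}\in\ell_\omega(u,v)$ for $u\ne v$ depends essentially on the downward closure of domain elements provided by Claim~\ref{claim-domain2} together with \textbf{IdeC}, and it is precisely this that reflects the freedom from cycles of the free algebra.
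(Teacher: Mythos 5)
Your proof is correct and follows exactly the route the paper intends: induction on term complexity, with the coherence conditions (CompC, DomC, RanC, together with PriC and IdeC for the off-diagonal cases) giving the left-to-right direction and the saturation conditions (CompS, DomS, RanS) giving the right-to-left direction. The only cosmetic remark is that the fact you need in the off-diagonal case --- that $c\le\overline{\dom(s)}$ forces $c$ to be a domain element --- is most directly Claim~\ref{claim-refl2} rather than Claim~\ref{claim-domain2}, though the former is derived from the latter.
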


\begin{proof}
This is an easy induction on the complexity of terms, using that $\mathcal{A}$ is representable.
For the left-to-right direction use that $G_\omega$ satisfies the coherence conditions
CompC, DomC and RanC.
For the right-to-left direction use 
that $G_\omega$ satisfies the saturation conditions
CompS, DomS and RanS.
\end{proof}

Recall that we assumed that $\mathcal{F}_\var\not\models s\le t$.
In the initial step of the construction we created the edge
$(u_{\overline{s}}, v_{\overline{s}})$ such that 
$\ell_0(u_{\overline{s}},v_{\overline{s}})=\overline{s}^\uparrow$.
Thus $\overline{s}\in\ell_\omega(u_{\overline{s}},v_{\overline{s}})$ and 
$\overline{t}\notin\ell_\omega(u_{\overline{s}},v_{\overline{s}})$.
Hence, by Lemma~\ref{lem-true}, $(u_{\overline{s}},v_{\overline{s}})\in s^\flat$ and 
$(u_{\overline{s}},v_{\overline{s}})\notin t^\flat$.
That is, $\mathcal{A}\not\models s\le t$, as desired.  This completes the final  step required in the proof of Theorem \ref{thm-main}.

\section{Demonic composition}
In this section we look at demonic composition.
We define the following set $\axd$ of axioms:
\begin{align}
x\dcomp (y\dcomp z)&=(x\dcomp y)\dcomp z\label{eqn:associativity}\\
\D(x)\dcomp x&=x\label{eqn:leftid}\\
\D(x)\dcomp \D(y)&=\D(y)\dcomp \D(x)\label{eqn:comm}\\
\D(\D(x)\dcomp y)&=\D(x)\dcomp \D(y)\label{eqn:normal}\\
x\dcomp \D(y)&=\D(x\dcomp y)\dcomp x\label{eqn:Dtwisted}\\
 \D\R(x)&=\R(x)\label{eqn:DR}\\
 \R\D(x)&=\D(x)\label{eqn:RD}\\
 \R\R(x)&=\R(x)\label{eqn:RR}\\
 \R(x)\dcomp \R(y)&=\R(y)\dcomp \R(x)\label{eqn:Rcomm}\\
 \R(x\dcomp y)\dcomp \R(y)&=\R(x\dcomp y)\label{eqn:Rorder}\\
 x\dcomp \R(x)&=x\label{eqn:rightid}
 \end{align}
 These axioms differ from \eqref{eq-ass}--\eqref{eq-rcom} by the inclusion of the stronger domain property \eqref{eqn:Dtwisted} and the weakening of the range property \eqref{eq-rcrrc} to \eqref{eqn:Rorder}.
Axioms \eqref{eqn:associativity}--\eqref{eqn:Dtwisted} state that 
the $\R$-free reduct of an algebra $\mathcal{S}=( S, \dcomp , \D,\R)$ is a restriction semigroup, 
while the remaining laws make $\mathcal{S}$ a two-sided closure semigroup in the sense of 
Jackson and Stokes~\cite{jacsto:01}.  

We claim that $\axd$ is sound for demonic composition of binary relations with domain and range,
i.e.,  for the representation class $\repr(\dcomp,\dom,\ran)$.
Soundness of the laws \eqref{eqn:associativity}--\eqref{eqn:Dtwisted} is observed in 
Desharnais, Jipsen and Struth~\cite{DJS}.  
Laws \eqref{eqn:DR}--\eqref{eqn:rightid} simply ensure that domain and range elements coincide, 
and that $\R(x)$ is the smallest domain element (with respect to the usual order in a meet semilattice) 
in the abstract algebra that acts as a right identity for~$x$; see~\cite[Proposition~1.2]{jacsto:01}.

Observe that \eqref{eqn:Dtwisted} fails for angelic composition.
Indeed, we may have
$(a,b)\in \D(x\acomp y)\acomp x$ (because $(a,b)\in x$ and we have some $c,d$ such that
$(a,c)\in x$ and $(c,d)\in y$)
while $(a,b)\notin x\acomp \D(y)$ (because we may have $(b,b)\notin \D(y)$).

The main result of this section is that $\axd$ provides a complete equational axiomatization 
for demonic composition with domain and range.

\begin{theorem}\label{thm-maind}
The variety $\vari(\dcomp,\dom,\ran)$ generated by the representation class $\repr(\dcomp,\dom,\ran)$
is axiomatized by $\axd$.
\end{theorem}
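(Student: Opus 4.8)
The plan is to follow the same step-by-step representation strategy used for the angelic case (Theorem~\ref{thm-main}), but adapted to demonic composition. As before, soundness (the right-to-left direction) is routine checking, so the substance is the converse: assuming $\axd\not\vdash s=t$ for terms $s,t$, I would build a representable demonic algebra $\mathcal{A}\in\repr(\dcomp,\dom,\ran)$ with $\mathcal{A}\not\models s=t$. Note that the demonic signature here has \emph{no join operation}, so the reduction to join-free terms in Section~\ref{sec-ej} is not available and not needed; instead the free algebra is built directly on terms modulo $\axd$-provable equality. First I would establish the analogues of the elementary properties in Claim~\ref{claim-sl}: that $\D$-elements form a semilattice under $\dcomp$, that $\D(a)$ is the least $\D$-element acting as a left identity for $a$ (and dually for $\R$ via \eqref{eqn:rightid}), and that domain and range elements coincide by \eqref{eqn:DR}--\eqref{eqn:RD}. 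Crucially, \eqref{eqn:Dtwisted} gives the twisted commutation $x\dcomp\D(y)=\D(x\dcomp y)\dcomp x$ that replaces the simpler angelic order laws.

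The heart of the argument is again a saturation construction of a labelled directed graph $G_\omega=\bigcup_n G_n$ whose edge-labels are principal upsets $a^\uparrow$ of free-algebra elements, built to satisfy coherence conditions (GenC, PriC, CompC, DomC, RanC, IdeC) and saturation conditions (CompS, DomS, RanS) exactly parallel to those in Section~\ref{sec-con}. The key modification is in the coherence/saturation relation \textbf{Comp}: the demonic composite $X\dcomp Y$ requires not only a witnessing intermediate point $w$ with $(u,w)\in X$ and $(w,v)\in Y$, but also the \emph{totality condition} that \emph{every} $X$-successor of $u$ lies in $\dom(Y)$. So the composition successor step ($i=2$) must be redesigned: whenever we add an edge forcing $a\dcomp b\in\ell(u,v)$, we must ensure that every outgoing $a$-labelled edge from $u$ already terminates at a node whose loop label contains $\D(b)$. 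This is where law \eqref{eqn:Dtwisted} does the real work: it lets us rewrite $a\dcomp\D(b)$-type expressions so that the domain-successor bookkeeping propagates consistently through the graph, much as \eqref{eq-dcdcd} and \eqref{eq-ddcdcd} did in the angelic proof.

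I would then prove a base lemma (the initial graph $G_0$, with one $(u_a,v_a)$ edge per free element $a$, is coherent), a successor lemma (each $G_{m+1}$ remains coherent), and a limit lemma ($G_\omega$ is coherent and saturated), mirroring Lemmas~\ref{lemma-base}, \ref{lem-coh} and~\ref{lem-sat}. With the valuation $x^\flat=\{(u,v):\overline{x}\in\ell_\omega(u,v)\}$, a truth lemma analogous to Lemma~\ref{lem-true} gives $(u,v)\in r^\flat$ iff $\overline{r}\in\ell_\omega(u,v)$ for every term $r$; here the left-to-right direction uses the coherence conditions and the right-to-left direction uses saturation, with the demonic composition clause verified using the totality bookkeeping just described. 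Finally, the edge $(u_{\overline{s}},v_{\overline{s}})$ carries $\overline{s}$ but not $\overline{t}$, witnessing $\mathcal{A}\not\models s\le t$ and hence $s\ne t$.

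I expect the main obstacle to be correctly enforcing the \emph{universal} (``for all successors'') clause of demonic composition throughout the inductive construction. In the angelic case, CompC is purely existential and monotone, so adding edges never breaks it; but the demonic totality requirement is anti-monotone in the set of $X$-successors, so creating a new $a$-edge out of $u$ could \emph{retroactively} invalidate some previously forced demonic composite $a\dcomp b\in\ell(u,v)$ unless the new target also carries $\D(b)$. The construction must therefore be arranged so that any node reachable by an $a$-edge automatically inherits all the relevant domain labels—this is precisely what the stronger domain axiom \eqref{eqn:Dtwisted}, together with \eqref{eqn:Rorder}, is designed to guarantee, and verifying this invariant is maintained at every successor step is the delicate technical core of the proof.
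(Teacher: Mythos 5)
Your plan diverges from the paper's actual proof, and as it stands it has gaps that would stop it from going through. The paper does \emph{not} transplant the saturation construction of Section~\ref{sec-con}. Instead it (i) isolates two quasi-identities defining \emph{cycle-free} models, (ii) proves the free algebras of $\axd$ are cycle free by a purely syntactic invariant (Lemma~\ref{lem:freeimplication}: the count $\out(\cdot)$ of variable occurrences outside $\D$ and $\R$ is preserved by every axiom, hence by every derivation), and (iii) proves every cycle-free model is representable by starting from the Wagner--Preston representation of the restriction-semigroup reduct --- which already represents $\dcomp$ and $\D$ correctly, since within it elements act as partial functions, where demonic and angelic composition coincide --- and then repeatedly curing \emph{range defects} by gluing on disjoint copies of forward closures $F_{\D(s)}$ with connector edges (Lemmas~\ref{lem:copy} and~\ref{lem:copy2} do the work of checking the universal clause of $\dcomp$).

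The concrete problems with your route are these. First, the demonic signature has no join, so the order $\le$ and the principal-upset labels $a^\uparrow=\{x\mid a\le x\}$ that drive every step of the angelic construction (PriC, the successor steps, Lemma~\ref{lem-true}, and the final separation of $s$ from $t$) are simply not available; you would have to replace them by some abstractly defined order and re-justify everything, and your closing sentence about ``witnessing $\mathcal{A}\not\models s\le t$'' is not even well formed in this signature. Second, you correctly identify the central difficulty --- the universal ``every $a$-successor of $u$ lies in $\dom(b)$'' clause is anti-monotone under edge addition --- but you resolve it only by asserting that \eqref{eqn:Dtwisted} ``is designed to guarantee'' the needed invariant; that is the whole problem, not a solution, and the paper's answer to it is structural (partial-functionality inside each $F_{\D(s)}$ plus the carefully constrained connector edges), not a single axiom application. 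Third, the free-algebra facts you would need (the analogue of cycle-freeness, Claims~\ref{claim-domain}--\ref{claim-refl2}) were obtained in the angelic case from the Andr\'eka--Bredikhin term-graph characterization (Theorem~\ref{thm-ab}), which is specific to angelic composition; you give no substitute, whereas the paper supplies one via the $\out(\cdot)$ invariant.
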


\begin{proof}
Let us say that a model of $\axd$ is \emph{cycle free}, if it satisfies the laws 
\begin{align}
x\dcomp y=x&\Rightarrow \D(y)=y\label{eqn:rightdegenerate}\\
x\dcomp y= \D(z)&\Rightarrow x=\D(x)\And y=\D(y)\label{eqn:domainprime}
\end{align}
for every $x,y,z$.
We will show in Lemma~\ref{lem-freecf} that the free algebras of the variety defined by $\axd$ are cycle free (this is somewhat analogous to Claims \ref{claim-domain}--\ref{claim-domain2} of the previous section).
Furthermore, by Lemma~\ref{lem-repcf}, cycle-free models of $\axd$ are in the variety 
$\vari(\dcomp,\dom,\ran)$ (in fact, they are representable).
\end{proof}
The rest of this section is devoted to prove these lemmas.

\subsection{The free algebra}


Let $s$ be a term in the variables $x_1,\dots,x_n$, and let $\operatorname{out}(s)$ denote 
the set of pairs $(x_i,j)$, where $j\in\omega$ denotes the number of occurrences of the variable $x_i$ 
that lie outside of any application of $\D$ or $\R$.  
Let $\out_{x_i}(s)$ denote the value $j$ such that $(x_i,j)\in \out(s)$.

\begin{lemma}\label{lem:freeimplication}
If  $\axd\vdash s=t$, then $\operatorname{out}(s)=\operatorname{out}(t)$,
for any pair $s,t$ of terms.
\end{lemma}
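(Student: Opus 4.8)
We need to show that if two terms $s,t$ are provably equal from $\axd$, then $\out(s)=\out(t)$, where $\out(s)$ records, for each variable $x_i$, the number $\out_{x_i}(s)$ of occurrences of $x_i$ that lie *outside* any application of $\D$ or $\R$.

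**The approach.** Provability in equational logic means $s=t$ can be derived by a finite sequence of substitution instances of the axioms, combined via reflexivity, symmetry, transitivity, and congruence (replacement of subterms). The standard way to prove an invariant like this is to show that $\out$ is preserved by each of these inference rules. Since $\out(\cdot)$ is defined componentwise over variables, it suffices to track a single variable $x_i$ and show $\out_{x_i}$ is invariant.

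**Key steps.**

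First I would define, for each term $r$ and variable $x$, the integer $\out_x(r)$ by the obvious structural recursion: $\out_x(x)=1$; $\out_x(y)=0$ for a variable $y\neq x$; $\out_x(r_1\dcomp r_2)=\out_x(r_1)+\out_x(r_2)$; and crucially $\out_x(\D(r))=\out_x(\R(r))=0$, since everything inside a $\D$ or $\R$ is, by definition, not counted as an outside occurrence. This recursion makes $\out_x$ well-defined on all terms.

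Second, I would verify the invariant on each axiom of $\axd$: for every axiom $\ell=\rho$ and every variable $x$, one checks $\out_x(\ell)=\out_x(\rho)$. This is the computational heart of the proof, but each check is immediate from the recursion. For instance, in \eqref{eqn:leftid}, $\D(x)\dcomp x=x$, the left side has $\out_x=0+1=1$ matching the right; in \eqref{eqn:Dtwisted}, $x\dcomp\D(y)=\D(x\dcomp y)\dcomp x$, every occurrence of $y$ is inside a $\D$ on both sides so $\out_y=0$ on both, while $x$ contributes $1$ on each side; in the purely domain/range laws \eqref{eqn:comm}--\eqref{eqn:normal} and \eqref{eqn:DR}--\eqref{eqn:Rorder}, every variable sits inside a $\D$ or $\R$, so $\out_x=0$ throughout. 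One also notes $\out_x$ is invariant under substitution: if $\sigma$ is a substitution, $\out_x(\sigma(r))$ is determined by $\out$-data of $r$ and $\sigma$, so a substitution instance of an axiom preserves the equality of $\out$ on both sides.

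Third, I would observe that the invariant is preserved by the inference rules of equational logic. Reflexivity, symmetry, and transitivity are trivial. For the congruence (replacement) rule, if $\out_x(u)=\out_x(u')$ then replacing a subterm $u$ by $u'$ inside a larger context $C[\cdot]$ preserves $\out_x(C[u])=\out_x(C[u'])$; this follows because $\out_x$ is additive over $\dcomp$ and is zeroed out by any enclosing $\D$ or $\R$, so the contribution of the replaced subterm enters the total in a way that depends only on its own $\out_x$-value (and on whether it lies under a $\D$ or $\R$, which is the same before and after replacement).

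**The main obstacle.** There is no deep obstacle; the only point requiring care is the congruence rule under the $\D$ and $\R$ operations. When a subterm is replaced inside $\D$ or $\R$, the two sides might have *different* $\out_x$-values as standalone terms, yet both get collapsed to $0$ by the enclosing $\D$ or $\R$, so the equality is preserved regardless. I would make sure to phrase the congruence step so that it handles uniformly both the case where the replaced subterm lies outside all $\D,\R$ (where equality of $\out_x$ is used directly) and the case where it lies inside some $\D$ or $\R$ (where the value is $0$ on both sides irrespective of the hypothesis). A clean induction on the derivation, or equivalently an induction on term structure for the congruence step, dispatches this.
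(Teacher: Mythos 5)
Your proof is correct and takes essentially the same approach as the paper: induction on the length of an equational derivation, checking that $\out$ agrees on the two sides of every axiom of $\axd$ and that this agreement is preserved under substitution and under replacement of a subterm in a context. Your explicit recursion for $\out_x$ (with $\D$ and $\R$ zeroing it out) and your careful handling of the congruence step merely spell out details the paper leaves to inspection.
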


\begin{proof}
This follows by induction on the length of a derivation: 
we show that the statement holds true for each individual step of deduction, 
and hence by a trivial induction argument, also for any finite sequence of deductions.  
A step of deduction starting from $s$ involves replacing a subterm of $s$ 
by the image of an axiom under some substitution.  
Assume we have an axiom $u=v$ (or $v=u$) from~$\axd$, a substitution~$\theta$, 
and are replacing an instance of the subterm $\theta(u)$ of $s$ by $\theta(v)$ to obtain a term $s'$.  
Thus it suffices to show that $\out(\theta(u))=\out(\theta(v))$.  
This follows because $\out(u)=\out(v)$, as can be seen by inspection of the laws in $\axd$.
\end{proof}

\begin{lemma}\label{lem-freecf}
The free algebras of the variety defined by $\axd$ are cycle free.
\end{lemma}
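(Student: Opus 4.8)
The plan is to establish the two cycle-freeness implications \eqref{eqn:rightdegenerate} and \eqref{eqn:domainprime} for the free algebra by reducing them to the syntactic invariant $\out$ from Lemma~\ref{lem:freeimplication}. The key observation is that $\out$ counts variable occurrences lying outside any $\D$ or $\R$, and both laws assert that a certain term is ``degenerate'' (equal to a pure domain/range element, which contains no such outside occurrences at all). So the strategy is: if $\freevar\models s\dcomp t = s$ or $\freevar\models s\dcomp t=\D(z)$, apply Lemma~\ref{lem:freeimplication} to transfer $\out$-equalities, and then argue that these force $t$ (respectively both $s$ and $t$) to have empty ``outside'' content, hence to be provably equal to $\D(t)$ (respectively $\D(s)$ and $\D(t)$).

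First I would treat \eqref{eqn:domainprime}, as it seems cleaner. Suppose $\freevar\models s\dcomp t=\D(z)$. Since $\D(z)$ has no variable occurrences outside a $\D$/$\R$, we have $\out(\D(z))=\emptyset$ in the sense that $\out_{x_i}(\D(z))=0$ for every variable. By Lemma~\ref{lem:freeimplication}, $\out(s\dcomp t)=\out(\D(z))$, and since $\out$ is additive over $\dcomp$ (the outside occurrences of $s\dcomp t$ are exactly those of $s$ together with those of $t$), this forces $\out_{x_i}(s)=\out_{x_i}(t)=0$ for all $i$. The hard part is the passage from this syntactic count back to a provable identity: I must show that a term with no outside variable occurrences is provably a domain element, i.e.\ $\axd\vdash r=\D(r)$ whenever $\out(r)=\emptyset$. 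This should follow by structural induction using \eqref{eqn:DR}--\eqref{eqn:RD} to collapse iterated $\D$, $\R$, and the fact that domain and range elements are closed under $\dcomp$ (a consequence of \eqref{eqn:normal} together with the range analogue derivable from \eqref{eqn:DR}--\eqref{eqn:Rorder}); a term all of whose variables sit inside $\D$ or $\R$ is built from domain/range elements by $\dcomp$, hence is itself one.

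For \eqref{eqn:rightdegenerate}, suppose $\freevar\models s\dcomp t=s$. Applying Lemma~\ref{lem:freeimplication} and additivity of $\out$ gives $\out_{x_i}(s)+\out_{x_i}(t)=\out_{x_i}(s)$ for every variable $x_i$, whence $\out_{x_i}(t)=0$ for all $i$. By the same lemma just established (no outside occurrences implies provably a domain element), we conclude $\axd\vdash t=\D(t)$, which is exactly the conclusion required.

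I expect the main obstacle to be the auxiliary claim that $\out(r)=\emptyset$ implies $\axd\vdash r=\D(r)$; everything else is bookkeeping with the additivity of $\out$ and the transfer provided by Lemma~\ref{lem:freeimplication}. The subtlety is that ``no outside occurrences'' is a purely syntactic condition, and I must verify that the syntactic shape of such a term---every maximal variable-bearing subterm guarded by a $\D$ or $\R$---can be rewritten, using only the axioms \eqref{eqn:DR}--\eqref{eqn:Rcomm} and the closure of domain/range elements under $\dcomp$, into a single application of $\D$. Care is needed because range elements must be reconciled with domain elements (via \eqref{eqn:DR}--\eqref{eqn:RD}) before the idempotent/commutativity laws can be applied uniformly; once that reconciliation is in place the induction goes through routinely.
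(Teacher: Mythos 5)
Your proposal is correct and follows essentially the same route as the paper: apply Lemma~\ref{lem:freeimplication}, use additivity of $\out$ over $\dcomp$ to conclude the relevant terms have no variable occurrences outside $\D$ or $\R$, and deduce that such terms are provably domain elements. The only difference is one of emphasis — the paper treats the final step ($\out(r)=\emptyset$ implies $\axd\vdash r=\D(r)$) as immediate, whereas you correctly flag it as needing a short structural induction via \eqref{eqn:normal} and \eqref{eqn:DR}--\eqref{eqn:RD}, which does go through.
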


\begin{proof}
For the first law \eqref{eqn:rightdegenerate}, assume that $s$ and $t$ are terms such that $s\dcomp t=s$.  
We show that $t=\D(t)$ is a consequence of $\axd$.  
Now, as $\axd\vdash s\dcomp t=s$, Lemma~\ref{lem:freeimplication} shows that $\out(s\dcomp t)=\out(s)$.  
As for all variables $x$, we have $\out_x(s\dcomp t)=\out_x(s)+\out_x(t)$ 
it follows that $\out_x(t)=0$ always.  
Thus all variables in $t$ lie under an application of $\D$ or $\R$, showing that $\axd\vdash t=\D(t)$, 
as required.

For the second law \eqref{eqn:domainprime}, assume that $s,t,d$ are terms such that 
$\axd\vdash s\dcomp t=\D(d)$.  An almost identical argument to before shows that all variables 
in $s$ and $t$ have all occurrences under an application of $\D$ or $\R$, 
and hence $\axd\vdash s=\D(s)$ and $t=\D(t)$.
\end{proof}

\subsection{Representing cycle-free algebras}

\begin{lemma}\label{lem-repcf}
Any cycle-free model $\mathcal{S}=(S,\dcomp ,\D,\R)$ of $\axd$
is representable: $\mathcal{S}\in\repr(\dcomp,\dom,\ran)$.
\end{lemma}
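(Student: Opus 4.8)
The plan is to represent a cycle-free model $\mathcal{S}$ by building a labelled directed graph and reading off a representation, in close analogy with the step-by-step construction of Section~\ref{sec-con}, but adapting the coherence and saturation conditions to the demonic operation $\dcomp$. First I would set up the vertices and edges so that each element $a\in S$ contributes a ``canonical'' edge from some input vertex to some output vertex, labelled by the principal upset $a^\uparrow$ in the semilattice order induced by $\D(S)$; domain elements should force the input and output vertices to coincide, exactly as \textbf{IdeC} does in the angelic case. The cycle-freeness laws \eqref{eqn:rightdegenerate} and \eqref{eqn:domainprime} play the role that Claims~\ref{claim-domain}--\ref{claim-domain2} played before: they guarantee that no genuine (non-domain) element sits on a loop, so that the edge relation can be kept antisymmetric. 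I would then run a fair step-by-step saturation, creating fresh witness vertices to realise composites and domain/range demands, checking at each stage that the demonic coherence conditions are preserved.

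The main point of departure from Section~\ref{sec-con} is the definition of the composition condition. Because demonic composition $\dcomp$ requires not only a witnessing intermediate point but also the \emph{universal} clause that every $X$-successor of $u$ lies in $\dom(Y)$, the graph-theoretic condition encoding $a\dcomp b\in\ell(u,v)$ must be more delicate than the purely existential \textbf{CompC}/\textbf{CompS} pair. Concretely, for the demonic composite to be represented correctly I must ensure that whenever $a\in\ell(u,w)$, the range element $\R(a)$ forces $w$ to carry $\D(b)$ in its loop label whenever $a\dcomp b$ is to appear at $(u,v)$; the twisted law \eqref{eqn:Dtwisted} and the range-order law \eqref{eqn:Rorder} are precisely what make these local demands mutually consistent. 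So the coherence conditions I would impose are: an antisymmetric reflexive-transitive edge relation (as in \textbf{GenC}), principal upsets as labels (as in \textbf{PriC}), a demonic composition rule, demonic domain/range loop rules (using $\D$ and $\R$ in place of the angelic $\dom,\ran$), and an idempotent-identity rule matching loops to domain elements.

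After the saturated graph $G_\omega$ is built, I would define the valuation $x^\flat=\{(u,v):x\in\ell_\omega(u,v)\}$ and prove the analogue of Lemma~\ref{lem-true}, namely that for every element $a\in S$ the interpretation $a^\flat$ equals $\{(u,v):a\in\ell_\omega(u,v)\}$, by induction on term complexity; the base case is the generators and the inductive steps for $\dcomp$, $\D$ and $\R$ use coherence for the left-to-right inclusion and saturation for the right-to-left inclusion. The crucial verification here is that the demonic composite $a^\flat\dcomp b^\flat$ computed \emph{inside} the full power set algebra $(\wp(U_\omega\times U_\omega),\dcomp,\dom,\ran)$ agrees with the label-based description of $(a\dcomp b)^\flat$; this is where the universal clause in the definition of $\dcomp$ must be matched against the local loop labels, and where the cycle-freeness is used to rule out spurious successor edges that would otherwise violate the universal condition. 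Finally, distinctness of elements follows because each $a\in S$ is separated by its canonical edge, so $\flat$ is injective and $\mathcal{S}$ embeds into its representation.

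I expect the main obstacle to be the demonic composition step, both in formulating the correct coherence/saturation conditions and in checking that they are preserved under the successor steps that create fresh witnesses. Unlike the angelic case, adding a witness vertex $w$ for $a\dcomp b$ can retroactively threaten the universal clause at other edges out of $u$, so I would need to argue that the labels assigned along the new edges (governed by $\D$ and $\R$ applied to $a$, $b$, and the incident labels $d,e$) never introduce an $X$-successor lacking the required $\D(Y)$; the twisted identity \eqref{eqn:Dtwisted} together with cycle-freeness should be exactly what closes this gap, much as \eqref{eq-ddcdcd} and the acyclicity of $\freevar$ did in the proof of Lemma~\ref{lem-coh}.
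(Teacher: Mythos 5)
Your plan correctly identifies where the difficulty lies --- the universal clause in the definition of $\dcomp$ --- but it does not resolve it, and that unresolved step is the entire content of the lemma. In a step-by-step construction modelled on Section~\ref{sec-con}, correctness of demonic composition is not a monotone property: in the angelic case, adjoining a fresh witness vertex can only create new instances of $a\acomp b$, so it suffices to check that the new labels are coherent; in the demonic case, adjoining a fresh edge $u\pointsto{a}w$ can \emph{destroy} an already-established instance of $a\dcomp b$ at $(u,v)$ unless $w$ carries $\D(b)$ on its loop, and this must hold simultaneously for \emph{every} $b$ with $a\dcomp b$ demanded at $u$, including demands that will only be scheduled later. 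You flag this as the main obstacle and assert that \eqref{eqn:Dtwisted} together with cycle-freeness ``should be exactly what closes this gap'', but you give no coherence condition that is actually preserved by your successor steps and no argument that one exists. As written, the proposal is a programme rather than a proof.

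The paper's proof avoids this problem by a structurally different route: it does not build a graph of principal upsets from scratch, but starts from the Wagner--Preston representation of the restriction-semigroup reduct $(S,\dcomp,\D)$ acting on itself (\cite[Theorem~3.9]{jacsto:01}), in which every element acts as a \emph{partial function}. Functionality makes the universal clause of $\dcomp$ trivial inside each building block, so domain and demonic composition are correct from the outset and only \emph{range defects} remain to be repaired. These are cured by gluing in fresh copies of the forward closures $F_{\D(s)}$ and adding connector edges into the old part; Lemmas~\ref{lem:copy} and~\ref{lem:copy2} are precisely the verifications that every edge labelled $a$ leaving a vertex (whether inside $F_{\D(s)}$ or a connector edge) can still be followed by $b$ whenever $a\dcomp b$ is defined there, so the universal clause survives the gluing. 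Cycle-freeness enters to show that the only cycles in the Wagner--Preston graph are loops and that $\D(s)\in F_{\D(s)}\setminus F_s$ for $s\neq\D(s)$, which is what makes the fresh-copy repair consistent. To salvage your approach you would need to build an analogue of this functionality (or an explicit invariant controlling \emph{all} $a$-successors of each vertex at every stage) into your coherence conditions; without it the induction does not close.
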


\begin{proof}
We piece together a representation $\theta$ by way of an inductive gluing of pieces of 
the Wagner--Preston $(\dcomp,\D )$-representations.  
Recall \cite[Theorem~3.9]{jacsto:01} that a restriction semigroup 
$(S,\dcomp ,\D)$ can be represented as partial maps over itself by
\[
a\mapsto \{(x\dcomp D(a),x\dcomp a)\mid x\in S\}=\{(x,x\dcomp a)\mid x\in S\dcomp \D(a)\}.
\]
For any $s\in S$ we let $F_s$, the \emph{forward closure} of $s$, 
denote the labelled directed graph obtained from this representation on the induced subgraph reachable 
from the point $s$.  At each step of our inductive gluing we will have a $(\dcomp,\D)$-representation 
which does not necessarily correctly represent $\R$.  
A \emph{range defect} for an element $s\in S$ will be a point $p$ of the representation 
in which the domain element $\R(s)$ is defined, but for which $p$ is not in the range of $s^\theta$.

\begin{claim}
If a restriction semigroup $(S,\dcomp,\D)$  is cycle free then the only cycles in the Wagner--Preston representation of $(S,\dcomp,\D)$ are loops.
\end{claim}

\begin{proof}
Assume for contradiction that there is a cycle in the Wagner--Preston representation of $(S,\dcomp,\D)$.  
The underlying graph of this representation is transitive, so there are $x,a,b\in S$ with $x=x\dcomp \D(a)$ 
and $x\dcomp a\dcomp \D(b)=x\dcomp a$ and $x\dcomp a\dcomp b=x$.  
Then by~\eqref{eqn:rightdegenerate} we have $a\dcomp b=\D(a\dcomp b)$.  
By~\eqref{eqn:domainprime} we have $a=\D(a)$ and $b=\D(b)$, 
so that in fact $x=x\dcomp a=x\dcomp a\dcomp b$, and the cycle is a loop.
\end{proof}

\begin{claim}
In a cycle-free restriction semigroup $(S,\dcomp ,\D)$, for any element $s\neq \D(s)$, 
we have $\D(s)\in F_{\D(s)}\backslash F_s$.
\end{claim}

\begin{proof}
Otherwise there would be $b$ with $s\dcomp b=\D(s)$.  Then $s=\D(s)$ by~\eqref{eqn:domainprime}.
\end{proof}

Let $\mathcal{S}=(S,\dcomp ,\D,\R)$ be a cycle-free model of $\axd$.
We will take the union over an $\omega$-chain of partial representations 
$\theta_0,\theta_1,\theta_2,\dots$ over sets $X_0\subseteq X_1\subseteq X_2\subseteq\dots$, 
where $\theta_{i+1}$ coincides with $\theta_i$ when restricted to $X_i$.  
The partial representations are constructed inductively, with the Wagner--Preston representation 
as the base case $\theta_0$ (so that $X_0$ is the universe $S$ of the algebra).  
We have the following inductive hypothesis.
\begin{enumerate}
\item 
Domain and demonic composition are correctly represented by $\theta_i$.
\item 
The partial representation $\theta_i$ is faithful in the sense that for $s\neq t$ there are points $p,q$ 
with $(p,q)\in s^\theta\backslash t^\theta$ or $(p,q)\in t^\theta\backslash s^\theta$.
\end{enumerate}
Note that by Hypothesis (1), only range might fail to be represented properly by $\theta_i$.  
However, as $s\dcomp \R(s)=s$ and $\R(s)$ is a domain element (by $\D(\R(s))=\R(s)$), 
it follows from Hypothesis~(1) that we do at least have $\R(s^{\theta_i})\subseteq \R(s)^{\theta_i}$.  
The construction of $X_{i+1}$ and $\theta_{i+1}$ will be such that all range defects present 
in $X_i$ under $\theta_i$ are corrected by $\theta_{i+1}$, though new range defects may have been
introduced at points in $X_{i+1}\backslash X_i$.  In this way there will be no range defects 
in the final representation over $\bigcup_{i\in\omega}X_i$ so that we will have achieved 
the desired representation of  $\mathcal{S}$.

We now begin the induction.
The conditions hold for the base case: 
the inductive hypothesis simply states that we have a faithful $(\dcomp ,\D)$ representation, 
which holds by~\cite[Theorem~3.9]{jacsto:01}.  

Let us assume that the inductive hypothesis holds on the partial representation $\theta_i$ of 
$(S,\dcomp ,\D,\R)$ over $X_i$.  
If range is correctly represented by $\theta_i$, then our proof is complete: 
we may let $X_{i+1}=X_i$ and $\theta_{i+1}=\theta_i$.  
Otherwise there are range defects in $X_i$ under $\theta_i$.  
We explain how to correct any such range defect.  
The set $X_{i+1}$ and partial representation $\theta_{i+1}$ are obtained by 
simultaneously applying the described method to all range defects in $X_i$ under $\theta_i$.  
To avoid proliferation of indices and symbols, for the remainder of the argument we use 
$X$ to abbreviate $X_i$ and $\theta$ to abbreviate $\theta_i$.  
Let $p\in X$ be a range defect for some element $s\in S$ under $\theta$: 
so $p\pointsto{\R(s)} p$, but $p$ is not in the range of $s^\theta$; 
that is, no point $p'\in X$ has $p'\pointsto{s}p$.  
Note that in this instance it cannot be that $s$ is a domain element, 
as then $s=\R(s)$ which would give $p\pointsto{s}p$.  
Thus $F_{s}$ is a proper subgraph of $F_{\D(s)}$.
Adjoin a disjoint copy of the forward closure $F_{\D(s)}$ to~$X$.  
Retain all edges and labels already existing in $X$ and in the newly adjoined $F_{\D(s)}$, 
but we add some new edges between $F_{\D(s)}\backslash F_{s}$ and $X$.  

Before we describe these new \emph{connector edges}, we observe the following lemma, 
which guarantees that suitable target points in $X$ exist.  
The situation is also depicted in Figure~\ref{fig:copy}, 
where boldface vertices and edges correspond to given assumptions in the lemma, 
and non-boldface vertices and edges are those shown to exist in the lemma.
%

\begin{figure}
\begin{tikzpicture}
\node (F) at (-.5,4,0) {$F_{\D(s)}$:};
\draw[dashed] (1,3) to [out=10,in=140] (10,3);
\draw[dashed] (1,3) to [out=-10,in=220] (10,3);
\node [vertex] (ds) at (1,3) {};
\node at (ds) [above] {$\D(s)$};
\node [vertex] (s) at (3,3) {};
\node at (s) [above] {$s$};
\draw [->,thick] (ds) to (s);
\node at (2.5,3.15)  {$s$};
\node [vertex] (t) at (7,3.4) {};
\node at (t) [above] {$t$};
\draw [->,thick] (s) to (t);
\node at (5.3,3.45)  {$a$};

\draw[dashed] (0,.5) to [out=20,in=160] (10,.5);
\node (X) at (-.5,1) {$X$:};
\node [vertex] (p) at (5,0.5) {};
\node at (p) [below] {$p$};
\draw (p) [out = 135, in = 45, looseness = 15,->,thick] to (p);
\node at (5.2,1) {$\R(s)$};
\node [vertexe] (q) at (7,0.7) {};
\node at (q) [below] {$\exists q$};
\draw [->] (p) to (q);
\node at (6.2,.82) {$\exists a$};
\end{tikzpicture}
\caption{Diagram depicting Lemma \ref{lem:copy}.  
A fresh copy of $F_{\D(s)}$ has been placed aside $X$ in order to eventually correct a defect 
in the range of $s^\theta$ at $p$ in $X$.  
If $s\pointsto{a}t$ then there exists $q$ such that $p\pointsto{a}q$.}\label{fig:copy}
\end{figure}
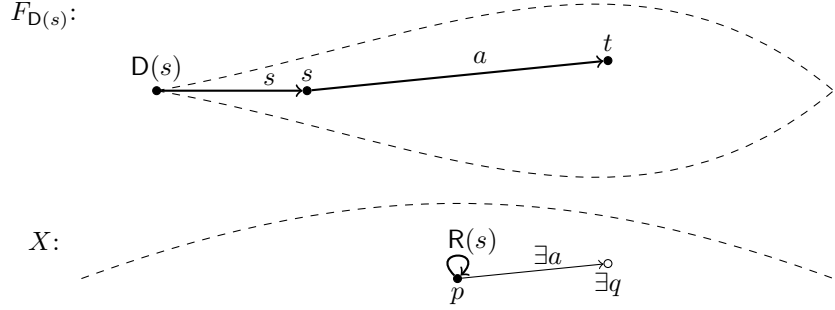

\begin{lemma}\label{lem:copy}
If $s\pointsto{a}t$ then there is $q\in X$ such that $p\pointsto{a}q$.
\end{lemma}

\begin{proof}
If $s\pointsto{a}t$ then $s\dcomp \D(a)=s$ so that $\R(s)\dcomp \D(a)=\R(s)$, 
as $\R(s)$ is the smallest domain element that acts as a right identity for $s$.  
As $\R(s)$ is defined at $p$, we also have $\R(s)\dcomp \D(a)$ defined at $p$, 
and as both $\D(a)$ and $\R(s)$ are domain elements, 
Hypothesis~(1) again ensures that $p\pointsto{\D(a)}p$.  
Then by Hypothesis~(1) again, there is $q\in X$ with $p\pointsto{a}q$.
\end{proof}

Now we describe the connector edges.
Let $v\in F_{s}$ and $u\in F_{\D(s)}$ with $u\neq v$ and assume that  $v=s\dcomp c$ for some $c$.  
For each edge $u\pointsto{a}v$ in $F_{\D(s)}$, Lemma~\ref{lem:copy} shows that 
there exists at least one $q\in X$ with $p\pointsto{c} q$.  
For every such $q$ we add the edge $u\pointsto{a} q$.  
In particular, the edge $\D(s)\pointsto{s}s$ starts at $\D(s)\in F_{\D(s)}$ and ends at $s\in F_s$, 
so that the range defect for $s$ at $p$ is cured.


\begin{figure}
\begin{tikzpicture}
\node (F) at (-.5,4,0) {$F_{\D(s)}$:};
\draw[dashed] (1,3) to [out=10,in=140] (10,3);
\draw[dashed] (1,3) to [out=-10,in=220] (10,3);
\node [vertex] (ds) at (1,3) {};
\node at (ds) [above] {$\D(s)$};
\node [vertex] (u) at (4,3) {};
\node at (u) [above] {$u$};
\node [vertex] (v) at  (6,3) {};
\node at (v) [above] {$v$};
\node [vertex] (w) at (8,3.2) {};
\node at (w) [above] {$w$};
\draw  (u) [->,thick] to (v);
\node at (5,3.15)  {$a$};
\draw (v)  [->,thick] to (w);
\node at (7,3.3)  {$b$};

\draw[dashed] (0,.5) to [out=20,in=160] (10,.5);
\node (X) at (-.5,1) {$X$:};
\node [vertex] (q) at (6,0.7) {};
\node at (q) [below] {$q$};
\node [vertexe] (qq) at (8,0.7) {};
\node at (qq) [below] {$\exists q'$};
\draw  (q) [->] to (qq);
\node at (7.4,0.53)  {$\exists b$};
\draw [->,thick] (u) to (q);
\node at (4.9,1.7)  {$a$};
\draw [->] (u) to (qq);
\node at (6,2.3)  {$\exists a\dcomp b$};
\draw [->] (v) [out=340,in=100] to (qq);
\node at (7.55,2.3)  {$\exists b$};
\end{tikzpicture}
\caption{Diagram depicting Lemma \ref{lem:copy2}.  If $u\pointsto{a}v\pointsto{b}w$ in $F_{\D(s)}$ and $u\pointsto{a}q$ is a connector edge, then there exists $q'$ in $X$ with $q\pointsto{b} q'$ and such that $u\pointsto{a\dcomp b}q'$ is a connector edge.}\label{fig:copy2}
\end{figure}

Before we verify the inductive hypotheses are maintained, we observe the following useful lemma, 
which is also depicted in Figure~\ref{fig:copy2} (in the case when $b\neq \D(b)$), 
with the same convention on boldface edges and vertices as in Figure~\ref{fig:copy}.

\begin{lemma}\label{lem:copy2}
Let $u\pointsto{a}q$ be a connector edge \up(so $u\in F_{\D(s)}$ and $q\in X$\up), 
and $v\in F_{\D(s)}$ have $u\pointsto{a}v$.  If $v\pointsto{b}w$ in $F_{\D(s)}$, 
then there is $q'\in X$ with $q\pointsto{b}q'$ in $X$ and for every such $q'$\up:
\begin{itemize}
\item  
$u\pointsto{a\dcomp b}q'$
\item 
if $b$ is a domain element then $q=q'$ and $v=w$, but otherwise $v\pointsto{b}q'$.
\end{itemize}
\end{lemma}

\begin{proof}
Assume the hypothesis of the lemma.  
Because $u\pointsto{a}q$ is a connector edge, there must exist some $v'\in S$ and $c\in S$ 
with $s\pointsto{c}v'$, and $u\pointsto{a}v'$ as well as $p\pointsto{c}q$.    
Because all elements acted as functions within $F_{\D(s)}$ and both $u\pointsto{a}v$ and $u\pointsto{a}v'$, 
we must have $v=v'$.  Indeed, we can write $v$ as $s\dcomp c$ and then $w$ as $s\dcomp c\dcomp b$, 
so that $s\pointsto{c\dcomp b}w$.   Also, as $u\pointsto{a}v\pointsto{b}w$ and 
because demonic composition coincides with angelic composition within $F_{\D(s)}$ 
we have $u\pointsto{a\dcomp b}w$.  
Hence, by Lemma~\ref{lem:copy} we have that $a\dcomp b$ is defined at $p$ in $X$.  
By Hypothesis~(1), every edge labelled by $a$ leaving $p$ is in the domain of $b$.  
In particular, $b$ is defined at $q$, so that there exists points $q'$ such that $p\pointsto{b}q'$. 
Because $s\pointsto{c\dcomp b}w$ and $u\pointsto{a\dcomp b}w$ the definition of connector edge ensures 
that there is a connector edge $u\pointsto{a\dcomp b}q'$, for any such $q'$.  
Provided that $v\neq w$ an almost identical argument shows that there is a connector edge $v\pointsto{b}q'$.
\end{proof}

We need to verify the inductive hypothesis holds.  
It is the verification of demonic composition that requires particular attention, 
so we check the other details first.

First observe that domain is correctly represented, as this was already true in~$X$ 
and in the copy of~$F_{\D(s)}$, and each connector edge $\pointsto{a}$ started from a point 
in~$F_{\D(s)}$ that already had an outgoing edge $\pointsto{a}$.
Faithfulness is preserved trivially, as the representation on~$X$ was already faithful, 
and no new edges were added to this.  

Now we must check demonic composition.  

{\bf Compositional witness: if $x\pointsto{a\dcomp b}y$, find $z$ with $x\pointsto{a}z\pointsto{b}y$.}
Assume that $a\dcomp b$ labels some edge $x\pointsto{a\dcomp b}y$.  
We need to verify there is $z$ with $x\pointsto{a}z\pointsto{b}y$.  
If $x,y\in X$ or $x,y\in F_{\D(s)}$ then we are done, as $\dcomp $ is correctly represented on these sets.  
As there are no edges from $X$ to $F_{\D(s)}$, it remains to consider the case of a connector edge, 
where $x\in F_{\D(s)}$ and $y\in X$.  In this case, $a\dcomp b$ also labels an edge 
from $x\in F_{\D(s)}$ to some point $s\dcomp c\in F_s$ (with $x\neq s\dcomp c$), 
and $p\pointsto{c} y$ in $X$.  Because~$\dcomp $ is correctly represented in $F_{\D(s)}$ 
it follows that there is a point $z'\in F_{\D(s)}$ with $x\pointsto{a}z'\pointsto{b}s\dcomp c$.  
(In fact, $z'=x\dcomp a$ by the definition of the Wagner--Preston representation.)  
If $z'=s\dcomp c$ (implying $z'\in F_s$), then $x\dcomp a\dcomp b=x\dcomp a$, 
which by~\eqref{eqn:rightdegenerate} shows that $b=\D(b)$.  
Then $s\pointsto{c\dcomp b}s\dcomp c$ in $F_{\D(s)}$.  
So $c\dcomp b$ labels an edge starting at $p$ by Lemma~\ref{lem:copy}.  
Thus every edge labelled $c$ leaving $p$ in $X$ is followed by one labelled $b$; 
in particular this is true for the edge $p\pointsto{c}y$.  
As $b$ is a domain element, it follows that $b$ labels a loop at $y$.  
Thus $x\pointsto{a}y\pointsto{b}y$ so that the required $z$ can be chosen to be $y$.

Now assume that $z'\neq s\dcomp c$.  
In this instance, there is a connector edge $z'\pointsto{b}y$, so that we may choose $z$ to be $z'$.  
This completes the check for compositional witnesses.

{\bf Demonic witness: if $x\pointsto{a\dcomp b}y$, verify every $x\pointsto{a}z$ has $z$ in the domain of $b^\theta$.}
Assume $x\pointsto{a\dcomp b}y$ and that $x\pointsto{a}z$.  
Note that if $x\in X$ then so also are all of $x,y,z$ and we are done by the inductive hypothesis.  
So for the remainder of the proof it suffices to assume that $x\in F_{\D(s)}$.  
Now, if $y\in F_{\D(s)}$, then every edge in $F_{\D(s)}$ labelled $a$ leaving $x$ 
(and there is only one within $F_{\D(s)}$) can be followed by $b$.  
By Lemma~\ref{lem:copy2}, this is also true of every connector edge leaving $x$.   
As the edge $x\pointsto{a}z$ is either in $F_{\D(s)}$ or a connector edge, 
the verification is complete for when $y\in F_{\D(s)}$.  
Now assume that $y\in X$.  Then $x\pointsto{a\dcomp b}y$ is a connector edge, 
so there is a point $y'$ in $F_{s}$ with $x\pointsto{a\dcomp b}y'$.  
Then we are in the previous case and deduce that every edge labelled by $a$ leaving $x$ 
(be it in $F_{\D(s)}$ or a connector edge) is followed by one labelled $b$.

{\bf Composition: if $(x,z)\in a^\theta\dcomp b^\theta$, verify that $x\pointsto{a\dcomp b}z$.}  
Assume $x\pointsto{a}y\pointsto{b}z$ and 
every $z'$ with $x\pointsto{a}y'$ has $z'$ in the domain of $b^\theta$.  
We need to show that $a\pointsto{a\dcomp b}z$.  
If $x,y,z\in F_{\D(s)}$ or $x,y,z\in X$ then we are done, 
because demonic composition is correctly represented in $F_{\D(s)}$ and in $X$, 
and we did not change the domains of any elements when adding new edges.  
(Note that this is the case even if $x,y,z\in F_{\D(s)}$ but we consider some $y'\in X$ 
that happens to lie in the domain of $b^\theta$: 
it remains true that every $x\pointsto{a}y''$ in $F_{\D(s)}$ also is in the domain of $b^\theta$, 
so we would still have $x\pointsto{a\dcomp b}z$.  
Alternatively, use the fact that composition is functional in $F_{\D(s)}$.)  
Thus we may assume that $x\in F_{\D(s)}$ but at least one of $y,z\in X$.  
If $y\in F_{\D(s)}$ then $y\pointsto{b}z$ is a connector edge, and the definition of such edges implies 
that there is $z'\in F_{s}$ with $x\pointsto{a}y\pointsto{b}z'$.  
Then $x\pointsto{a\dcomp b}z'$, as composition in $F_{\D(s)}$ is functional.  
But then $x\pointsto{a\dcomp b}z$ also, as required.  
Thus we may assume that both $y$ and $z$ lie in $X$ 
(there are no edges from $X$ to $F_{\D(s)}$, so if $y$ in $X$ then $z\in X$ also).

In this instance, there is $y'\in F_s$ and $x\pointsto{a}y'$.  
We are assuming that every such point $y'$ is in the domain of $b^\theta$, 
so it follows that there is $z'$ with $y'\pointsto{b} z'$.  
Moreover, as $y'\in F_s$ we can select  $z'\in F_s$ also.  T
hen $x\pointsto{a\dcomp b}z'$, as composition in $F_{\D(s)}$ is functional.  
We have not yet shown that $x\pointsto{a\dcomp b}z$.  
There is $c\in S$ such that $s\pointsto{c}s\dcomp c=y'$ in $F_{\D(s)}$, 
and therefore $s\pointsto{c\dcomp b}z'$.  
Thus $(c\dcomp b)^\theta$ is defined at $p$ also, and hence we have $p\pointsto{c\dcomp b}z$ 
(as $p\pointsto{c}y\pointsto{b}z$).  
Thus the definition of connector edges ensures that $x\pointsto{a\dcomp b}z$ also.  
This completes the proof of Lemma~\ref{lem-repcf}.
\end{proof}
By Lemma \ref{lem-freecf}, Theorem~\ref{thm-maind} now follows immediately from Lemma \ref{lem-repcf}.

\section{Conclusion and open problems}
We have identified simple axiom systems that precisely capture the equational properties of angelically modelled programs with composition, domain, range and union and demonically modelled programs with (demonic) composition, domain and range.  This is part of a wider effort to place various logical frameworks for the formal reasoning about programs into simple algebraic settings.  In the case of demonic operations, in particular, this is in relative infancy and we hope that the present contribution will stimulate further work in this direction.

We have used the fact that the free algebras are free from cycles in both the angelic and
demonic cases, and we noted that representable algebras are not cycle-free in general.
Thus axiomatising the representation classes may require different methods.
Note that the classes $\repr(\acomp,\dom,\ran)$ and $\repr(\acomp,\dom,\ran,+)$ of algebras 
of binary relations 
in fact have no finite axiomatisation \cite{hirmik,jacsto:amon}.  
On the other hand, the class of partial maps under composition, domain and range has a finite 
axiomatisation~\cite{sch}. 
We conclude with an open problem.

\begin{question}
Does the class $\repr(\dcomp,\dom,\ran)$ 
have a finite axiomatisation?  
\end{question}


\bibliographystyle{amsplain}

\end{document}